\newif\ifproof
\newif\ifconf
\newcommand{\xRightarrow}[2][]{\ext@arrow 0359\Rightarrowfill@{#1}{#2}}
\DeclareMathAlphabet{\mathcal}{OMS}{cmsy}{m}{n} %
\renewcommand*{\vec}[1]{\mathbf{#1}\xspace}
\newcommand{\REVISED}[2][]{#2\xspace}
\newcommand\alphabet[1]{\ensuremath{\mathbf{#1}}\xspace}
\newcommand\constS{\alphabet{C}}
\newcommand\varS{\alphabet{V}}
\newcommand\predS{\alphabet{P}}
\newcommand\arity{\ensuremath{\textit{ar}}}
\newcommand\Inst[1][I]{\ensuremath{\mathcal{#1}}\xspace}
\newcommand\Jnst{\Inst[J]}
\newcommand\Dnst{\Inst[D]}
\newcommand\univ[2][U]{\ensuremath{\mathbb{#1}^{#2}}\xspace}
\newcommand\vars{\ensuremath{\textit{vars}}}
\newcommand\univi[2][I]{\ensuremath{\mathbb{#1}^{#2}}\xspace}
\newcommand\freevars{\ensuremath{\textit{free}}}
\newcommand\domain{\ensuremath{\textit{dom}}}
\newcommand\transrel[2][]{\ensuremath{\mathrel{\xRightarrow{#2}_{#1}}}}
\newcommand\tuple[1]{\ensuremath{\langle #1\rangle}\xspace}
\newcommand\firing[1]{\ensuremath{\mathrel{[#1\rangle}}}
\newcommand\enabled[1]{\ensuremath{[#1\rangle}\xspace}
\newcommand\act{\ensuremath{\textit{act}}\xspace}
\newcommand\actS{\ensuremath{\textsc{Act}}\xspace}
\newcommand\actsigma{\ensuremath{\tuple{\act,\sigma}}\xspace}
\newcommand\actsigmaS{\ensuremath{\actS \Sigma}\xspace}
\newcommand\setAdd{\ensuremath{\tt Add}}
\newcommand\setDel{\ensuremath{\tt Del}}
\newcommand\homeq{\ensuremath{\mathbin{\leftrightarrows}}\xspace}
\newcommand\conDom{\mathbb{C}\xspace}
\newcommand\absDom{\mathbb{A}\xspace}
\newcommand\nullS{\alphabet{N}\xspace}
\newcommand\nullE{\alphabet{n}\xspace}
\newcommand\guard{\ensuremath{g}\xspace}
\newcommand\A{P(\text{A})\xspace}
\newcommand\B{P(\text{B})\xspace}
\newcommand\C{P(\text{C})\xspace}
\newcommand\pX{P(x)}
\newcommand\pN[1]{\ensuremath{P(\nullE_{#1})}}
\newcommand\fNN[2]{\ensuremath{F(\nullE_{#1}, \nullE_{#2})}}
\newcommand\fXN[2]{\ensuremath{F(\text{#1}, \nullE_{#2})}}
\newcommand\fAB{F(\text{A},\text{B})}
\newcommand\fBA{F(\text{B},\text{A})}
\newcommand\fAC{F(\text{A},\text{C})}
\newcommand\fXY{F(x,y)}
\newcommand\fYX{F(y,x)}
\newcommand\fYZ{F(y,z)}
\newcommand\fZX{F(z,x)}
\begin{document}
\title{Abstract Domains for Database Manipulating Processes}
\titlerunning{Abstract Domains for DMS}
\author{
   Tobias Schüler\inst{1}\orcidID{0009-0008-1559-133X} \and
   Stephan Mennicke\inst{2}\orcidID{0000-0002-3293-2940} \and
   Malte Lochau\inst{1}\orcidID{0000-0002-8404-753X} 
}
\authorrunning{Schüler et al.}
\institute{
	University of Siegen, \email{firstname.lastname@uni-siegen.de} \and
	Knowledge-Based Systems Group, TU Dresden, Dresden, Germany \email{stephan.mennicke@tu-dresden.de}
}
\maketitle              %
\begin{abstract}
Database manipulating systems (DMS) formalize operations on relational databases 
like adding new tuples or deleting existing ones. 
To ensure sufficient expressiveness for capturing practical database systems, 
DMS operations incorporate as guarding expressions 
first-order formulas over countable value domains.
Those features impose infinite state, infinitely branching processes 
thus making automated reasoning about properties like reachability of states intractable. 
Most recent approaches therefore restrict DMS to obtain decidable fragments. 
Nevertheless, a comprehensive semantic framework capturing 
full DMS, yet incorporating effective notions 
of data abstraction and process equivalence is an open issue. 
In this paper, we propose DMS process semantics based 
on principles of abstract interpretation. 
The concrete domain consists of all valid databases, whereas the 
abstract domain employs different constructions for unifying sets of databases
being semantically equivalent up to particular fragments of the DMS guard language. 
The connection between abstract and concrete domain is effectively 
established by homomorphic mappings whose properties and restrictions 
depend on the expressiveness of the DMS fragment under consideration. 
We instantiate our framework for canonical DMS fragments 
and investigate semantical preservation of abstractions up to bisimilarity, 
being one of the strongest equivalence notions for operational process semantics. %
 
\keywords{database manipulating systems \and abstract interpretation \and labeled transition systems \and
bisimulation equivalence.}
\end{abstract}
\section{Introduction}\label{sec:introduction}
\noindent\textbf{Background and Motivation.}
Modern software systems intensively interact with diverse environmental components which often includes
one or more (relational) databases.
Database manipulating systems~\cite{abdulla2018complexity} (DMS) and similar approaches~\cite{calvanese2015implementing,cangialosi2010conjunctive,bagheri2011foundations,montali2017db} characterize the operational behavior of (relational) database systems
by formalizing actions consecutively transforming the current state of 
databases by adding new tuples or deleting existing ones.
The action language supported by DMS-like formalisms must be sufficiently expressive 
to capture crucial behavioral aspects of practical database systems.
To this end, those actions combine set-based add/delete operations
with FOL formulas both defined on databases over (countable) value domains~\cite{abiteboul1995foundations}.
The FOL part serve as guarding expressions for actions
which, if enabled, have the ability to further expand and/or narrow the active domain of
databases reached in the subsequent state.
However, these distinct features of DMS-like formalisms impose intrinsically problematic
properties on the underlying operational semantics.
For instance, using labeled transitions systems (LTS)~\cite{calvanese2015implementing},
the resulting process model is not only non-regular and infinite-state, 
but even infinitely branching as arbitrary fresh data may be added to databases
 in a step.
Essential correctness properties of DMS processes like
reachability of states are thus not only theoretically undecidable, but
also practically intractable by state-of-the-art reasoning tools.
As a pragmatic workaround, most approaches consider bounded state spaces
and/or narrow down expressiveness of DMS languages to obtain decidable 
\REVISED[]{fragments~\cite{abdulla2018complexity}.}

\smallskip
\noindent\textbf{Contributions.}
In this paper, we apply the framework of \emph{abstract interpretation}~\cite{Cousot1977,dams1997abstract}
to tame the LTS semantics of DMS processes.
In the concrete domain, the set of LTS states corresponds to all valid
databases of a given database schema over infinite value domains.
In the abstract domain, LTS states are constructed by employing
different abstraction operators for unifying subsets of databases.
This abstract representation enables us to effectively connect
the abstract and concrete domains by means of homomorphic mappings.
The types of properties of DMS processes being preserved and/or reflected
by abstraction depend on the expressiveness of the DMS fragment used in DMS actions 
as well as the notion of process equivalence under consideration.
We instantiate our framework for canonical DMS fragments
and investigate behavior preservation of abstractions up to bisimilarity. 
As bisimilarity constitutes one of the strongest equivalence notions for LTS-based process semantics,
our abstraction builds the basis for guaranteeing preservation of essential semantical properties.
\REVISED[]{In this way, our framework provides a sound 
conceptual basis for building effective model-checking tools for DMS process verification~\cite{dams1997abstract}.}

\ifconf
\paragraph{Extended Version.}
We omit the proofs for the main results due to space restrictions and instead 
refer to the extended version~\cite{arxiv2023report}.
\fi
\section{Foundations}\label{sec:foundation}
\paragraph{Databases.}
We assume a first-order (FO) vocabulary consisting of mutually disjoint (countably infinite) sets of constants $\constS$, variables $\varS$, and predicates $\predS$.
Each predicate $p\in\predS$ has an arity $\arity(p)\in\mathbb{N}$.
Terms are either constants or variables, and for a list of terms $\vec t = t_{1},\ldots,t_{n}$ we denote its length by $|\vec t|=n$.
An expression $p(\vec t)$ is an atom if $p\in\predS$ and $\vec t$ is a term list, such that $\arity(p)=|\vec t|$.
An atom is \emph{grounded} if it is variable-free and we call a finite set of ground atoms $\Dnst$ a \emph{database}.
The universe of all databases is $\mathbb{U}^{\constS}$.
A (possibly infinite) set of ground atoms $\Inst$ is an \emph{instance} with the respective universe $\univi{\constS}$.
Note that, $\univ{\constS} \subseteq \univi{\constS}$.

\paragraph{Guards.}
We consider FOL formulas $\REVISED[\varphi]{\guard}$ to serve as guards as follows:
\begin{equation}\label{eq:fo_formulas}
	\begin{array}{rcc|c|c|c|c}
  		\Phi & ::= & p(\vec {t})~&~t = u~&~\neg\Phi~&~\Phi\wedge\Phi~&~\exists x .\ \Phi 
	\end{array}
\end{equation}
where $p(\vec t)$ is an atom, $t,u$ are terms, and $x\in\varS$.
The terms occurring in guard $\REVISED[\varphi]{\guard}$ being variables are referred to by the set $\vars(\REVISED[\varphi]{\guard})$.
A variable $x\in\vars(\REVISED[\varphi]{\guard})$ is either \emph{free} or \emph{bound} in $\REVISED[\varphi]{\guard}$, defining the set $\freevars(\REVISED[\varphi]{\guard})$\footnote{$\freevars(r(\vec t)) = \vec t \cap \varS$, $\freevars(t = u) = \{ t, u \} \cap \varS$, $\freevars(\neg\REVISED[\varphi]{\guard}) = \freevars(\REVISED[\varphi]{\guard})$, $\freevars(\REVISED[\varphi]{\guard}\wedge\psi)=\freevars(\REVISED[\varphi]{\guard})\cup\freevars(\psi)$, and $\freevars(\exists x .\ \REVISED[\varphi]{\guard}) = \freevars(\REVISED[\varphi]{\guard})\setminus\{ x \}$.} of free variables.
\paragraph{Guards as FOL Fragments.}
\begin{table}[tbp]
	\caption{Guard fragments, their formula shape, and their abbreviation}
\label{fig:guard_classes}
\begin{tabular}{p{4.8cm}  p{1.5cm}  p{5.5cm} }
	\toprule
	guard fragment & 
	abbrv. & 
	formula \\ 
	\midrule
	\emph{normal conjuncitve guard} &
		NCG&
		$\exists \vec y .\ a_{1} \wedge \ldots \wedge a_{m} \wedge \neg b_{1} \wedge \ldots \wedge \neg b_{n}$ \\ 
	\emph{projection-free NCG} &
		pf-NCG&
		$a_{1} \wedge \ldots \wedge a_{m} \wedge \neg b_{1} \wedge \ldots \wedge \neg b_{n}$ \\ 
	\emph{conjuncitve guard}&
		CG&
		$\exists \vec y .\ a_{1} \wedge \ldots \wedge a_{m}$ \\ 
	\emph{projection-free CG} &
		pf-CG&
		$a_{1} \wedge \ldots \wedge a_{m}$ \\ 
	\emph{conjunction of negated atoms} &
		CNA&
		$\forall \vec y .\ \neg a_{1} \wedge \ldots \wedge \neg a_{m}$ \\ 
	\bottomrule
\end{tabular}
\end{table}
A \emph{normal conjunctive guard} (NCG) is a formula
\begin{equation}\label{eq:ncg}
	\exists \vec{y} .\ a_1 \wedge \ldots \wedge a_m \wedge \neg b_1 \wedge \ldots \wedge \neg b_n
\end{equation}
where $\vec{y}$ is a list of variables occurring in the atoms $a_1, \ldots, a_m, b_1, \ldots, b_n$.
For an NCG $\guard$ of shape \eqref{eq:ncg} we refer to 
the positive guard part by $\guard^{+} = \exists \vec y .\ a_{1} \wedge \ldots \wedge a_{m}$ and 
its negated part by $\guard^{-} = \exists \vec y .\ \neg b_{1} \wedge \ldots \wedge \neg b_{n}$, respectively.
Whenever convenient, $\guard^{+}$ ($\guard^{-}$, resp.) identifies the set of atoms occurring within $\guard$, meaning $\guard^{+} = \{ a_{1}, \ldots, a_{m} \}$ ($\guard^{-} = \{ b_{1}, \ldots, b_{n} \}$, resp.).
An NCG $\guard$ with $\guard^{-} = \emptyset$ is a \emph{conjunctive guard} (CG).
An NCG $g$ is \emph{safe} if $\vars(g^-)\subseteq\vars(g^+)$.
Similarly, the other guard fragments are summarized in Table~\ref{fig:guard_classes}.

\paragraph{Substitution.}
A \emph{substitution} is a partial function $\sigma : \varS \to \constS$ mapping variables to constants.
The set of all variables for which $\sigma$ is defined is denoted by $\domain(\sigma)$.
We call $\sigma$ a \emph{substitution for guard $\REVISED[\varphi]{\guard}$} if $\vars(\REVISED[\varphi]{\guard})\subseteq\domain(\sigma)$.
Such a substitution replaces variables of a guard by constants and, thereby, forms a \emph{guard match}.
For convenience, we assume for every substitution $\sigma$ and constant $c\in\constS$, $\sigma(c)=c$, extending the signature of $\sigma$ to $\varS \cup \constS \to \constS$.
If $\vec t = t_1 \dots t_n$ is a list of terms and $\sigma$ a substitution defined for all variables in $\vec t$, we denote by $\vec t\sigma$ the term list $\sigma(t_{1}) \dots \sigma(t_{n})$.
A substitution $\sigma$ is a \emph{match to guard $\REVISED[\varphi]{\guard}$ in instance $\Inst$} if (a) $\freevars(\REVISED[\varphi]{\guard})=\domain(\sigma)$ and (b) $\Inst, \sigma \models \REVISED[\varphi]{\guard}$, where
\begin{itemize}
  \item $\Inst, \sigma \models p(\vec t)$ if $p(\vec t\sigma)\in\Inst$,
  \item $\Inst, \sigma \models t = u$ if $t\sigma = u\sigma$,
  \item $\Inst, \sigma \models \neg\REVISED[\varphi]{\guard}$ if $\Inst, \sigma \models \REVISED[\varphi]{\guard}$ does not hold,
  \item $\Inst, \sigma \models \REVISED[\varphi]{\guard}\wedge\REVISED[\psi]{\guard'}$ if $\Inst,\sigma \models\REVISED[\varphi]{\guard}$ and $\Inst,\sigma \models \REVISED[\psi]{\guard'}$, and
  \item $\Inst, \sigma \models \exists x .\ \REVISED[\varphi]{\guard}$ if $\Inst, \sigma[x\mapsto c] \models \REVISED[\varphi]{\guard}$ for some $c\in\constS$.
\end{itemize}

\paragraph{Guard matches.} 
We denote the set of all matches to guard $\REVISED[\varphi]{\guard}$ in instance $\Inst$ by $\REVISED[\varphi]{\guard}(\Inst)$.
We may simply write $\guard$ to identify a guard.
A guard match to NCGs $\guard = \exists \vec y .\ \psi$ in $\Inst$ is tightly connected to the existence of homomorphisms from $\psi$ (viewed as a set of atoms) to instance $\Inst$.
A function $h : \constS\cup\varS \to \constS\cup\varS$ is called a \emph{homomorphism} from a set of atoms $\Inst[A]$ into a set of atoms $\Inst[B]$ if (a) $h(c)=c$ for all $c\in\constS$ and (b) $p(t_1, \dots, t_n)\in\Inst[A]$ implies $p(h(t_1), \dots ,h(t_n))\in\Inst[B]$.

\ifproof
	Also NCAs, CGs and their negation have a correspondence to homomorphisms as follows.
	\begin{proposition}\label{prop:hom-cg-lemma}
		For instance $\Inst$, CG $\guard$, and substitution $\sigma$, $\sigma\in \guard(\Inst)$ if, and only if, $\freevars(\guard)=\domain(\sigma)$ and
		there is a homomorphism $h : \guard \to \Inst$ such that $\sigma\subseteq h$.
	\end{proposition}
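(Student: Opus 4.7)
The plan is to unfold both sides into statements about the constants witnessing the existentials in $\guard = \exists \vec{y} .\ a_1 \wedge \ldots \wedge a_m$, and then to identify those witnesses with the values a homomorphism assigns to the variables in $\vec y$.

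For the forward direction $(\Rightarrow)$, I would start from $\Inst, \sigma \models \guard$ and peel off one existential quantifier per bound variable, obtaining constants $\vec c$ of length $|\vec y|$ such that the extended substitution $\sigma' := \sigma[\vec y \mapsto \vec c]$ satisfies $a_i \sigma' \in \Inst$ for every $i$. The required homomorphism $h$ is then read off from $\sigma'$: set $h(c) = c$ on constants, $h(x) = \sigma'(x)$ for $x \in \vars(\guard)$, and let $h$ be the identity elsewhere. Condition (a) of the homomorphism definition holds by construction, condition (b) follows from $a_i\sigma' \in \Inst$, and $\sigma \subseteq h$ is immediate because $h$ agrees with $\sigma$ on $\freevars(\guard) = \domain(\sigma)$.

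For the backward direction $(\Leftarrow)$, given a homomorphism $h : \guard \to \Inst$ with $\sigma \subseteq h$, I would let $\sigma'$ be the restriction of $h$ to $\vars(\guard)$ and verify that $\sigma' = \sigma[\vec y \mapsto h(\vec y)]$ using $\sigma\subseteq h$ and $\freevars(\guard) = \domain(\sigma)$. For each atom $a_i$ we have $a_i \sigma' = h(a_i) \in \Inst$, hence $\Inst, \sigma' \models a_1 \wedge \ldots \wedge a_m$; re-introducing the existentials with the witnesses $h(\vec y)$ then yields $\Inst, \sigma \models \guard$, so $\sigma \in \guard(\Inst)$.

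The only subtle point is the type discipline for $h$: the signature $\constS \cup \varS \to \constS \cup \varS$ a priori allows variables to be mapped to variables, but since $\Inst$ consists of ground atoms, condition (b) of the homomorphism definition forces $h(y) \in \constS$ for every variable $y$ that occurs in some atom of $\guard$. This is precisely what licenses re-interpreting $h$ as (an extension of) a substitution in the backward direction; the remaining reasoning is a straightforward induction on the existential prefix, which for CGs is trivially flat.
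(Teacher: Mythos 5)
Your proof is correct. The paper states Proposition~\ref{prop:hom-cg-lemma} without giving a proof of its own (it is the standard Chandra--Merlin-style correspondence between conjunctive-guard matches and homomorphisms), and your argument --- peeling off the existential prefix to obtain witnesses $\vec c$, reading the homomorphism off the extended substitution $\sigma[\vec y\mapsto\vec c]$, and conversely restricting $h$ to $\vars(\guard)$ to recover the witnesses --- is exactly the intended reasoning. Your closing remark is also the right one: since $\Inst$ consists of ground atoms, condition (b) of the homomorphism definition forces $h$ to map every variable occurring in $\guard$ into $\constS$, which is what legitimizes reading $h$ back as a substitution despite the liberal signature $\constS\cup\varS\to\constS\cup\varS$.
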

	\begin{proposition}\label{prop:hom-nca-lemma}
		For instance $\Inst$, NCA $\guard$, and substitution $\sigma$, $\sigma\in \guard(\Inst)$ if, and only if, $\freevars(\guard)=\domain(\sigma)$ and
		there is a function $h : \constS\cup\varS\to\constS$, such that (a) $h(c)=c$ for all $c\in\constS$, (b) $h(\guard)\cap\Inst = \emptyset$, and (c) $\sigma\subseteq h$.
	\end{proposition}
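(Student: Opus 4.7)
The plan is to mirror the proof pattern of Proposition~\ref{prop:hom-cg-lemma}, replacing ``homomorphism into $\Inst$'' by ``function mapping atoms outside $\Inst$'' and treating each direction of the biconditional separately. Throughout I would read $\guard$ both as the NCA formula and as the underlying set of atoms $\{a_1,\ldots,a_m\}$, following the paper's existing overloading of notation for CGs and NCGs.

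For the ``only if'' direction, suppose $\sigma\in \guard(\Inst)$, so by definition $\freevars(\guard)=\domain(\sigma)$ and $\Inst,\sigma\models\guard$. I would then define a total function $h:\constS\cup\varS\to\constS$ by setting $h(c)=c$ on constants, $h(x)=\sigma(x)$ on the free variables of $\guard$, and $h(y)$ to be a fresh constant (absent from the active domain of $\Inst$ and of $\guard$) for every remaining variable $y\in\varS$. Conditions (a) and (c) then hold by construction, and for (b) I would show that no $h(a_i)$ lies in $\Inst$: any $a_i$ whose variables are all in $\domain(\sigma)$ is forbidden from $\Inst$ by the $\neg$-clause of the semantics applied to $\sigma$, while any $a_i$ containing a non-free variable receives a fresh constant somewhere in its image and therefore cannot occur in $\Inst$.

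For the ``if'' direction, assume $\freevars(\guard)=\domain(\sigma)$ and that $h$ satisfying (a)--(c) exists. Condition (b) immediately gives $h(a_i)\notin \Inst$ for every atom $a_i$ of $\guard$, so by the atom clause of $\models$ we have $\Inst,h\not\models a_i$, and by the $\neg$-clause $\Inst,h\models\neg a_i$. Repeated use of the $\wedge$-clause yields $\Inst,h\models\neg a_1\wedge\ldots\wedge\neg a_m$. Since $\sigma\subseteq h$ and all free variables of $\guard$ lie in $\domain(\sigma)$, restricting $h$ to these variables preserves the satisfaction and delivers $\Inst,\sigma\models\guard$, i.e.\ $\sigma\in \guard(\Inst)$.

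The main obstacle I anticipate is handling any quantifier prefix an NCA may carry: the ``there is a function $h$'' phrasing produces only one extension of $\sigma$ to the bound variables, whereas an outer universal quantifier would demand \emph{every} such extension to avoid $\Inst$. The fresh-constant construction above is exactly what allows a single $h$ to witness the universally quantified property, since atoms mentioning a bound variable automatically map outside $\Inst$. Carefully spelling this step out, and disentangling the free-variable obligations (fixed by $\sigma$) from the bound-variable freedoms (absorbed by fresh constants), is the delicate part of the argument and the natural dual to the homomorphism extension argument used in Proposition~\ref{prop:hom-cg-lemma}.
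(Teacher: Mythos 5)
Your ``only if'' direction is fine -- in fact fresh constants are not needed there: since the universal quantifier already guarantees that \emph{every} instantiation of the bound variables keeps the atoms out of $\Inst$, any total extension of $\sigma$ serves as the witness $h$. The genuine gap is in the ``if'' direction, and you put your finger on it yourself in your closing paragraph before arguing it away incorrectly. An NCA has the shape $\forall \vec y.\ \neg a_1 \wedge \ldots \wedge \neg a_m$, so $\Inst,\sigma\models\guard$ requires that every assignment of constants to $\vec y$ keeps all $a_i$ out of $\Inst$. A single function $h$ satisfying (a)--(c) only certifies this for the one instantiation $h|_{\vec y}$; the step ``restricting $h$ to the free variables preserves satisfaction'' is invalid because satisfaction of the universally quantified formula was never established. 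Concretely, take $\guard = \forall y.\ \neg P(y)$, $\Inst = \{P(\text{A})\}$, $\sigma=\emptyset$ and $h(y)=\text{B}$: conditions (a)--(c) hold and $h(\guard)\cap\Inst=\emptyset$, yet $\Inst,\sigma\not\models\guard$ because $P(\text{A})\in\Inst$. The fresh-constant device cannot rescue this: in the ``if'' direction $h$ is \emph{given}, not constructed, and even a fresh-valued $h$ says nothing about the instantiations that land inside the active domain of $\Inst$.

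The paper provides no proof of this proposition to compare against, but the way the proposition is actually used in the appendix (proof of Theorem~\ref{theorem:union_abstraction_forall}) reveals the intended reading: there, enabledness of a CNA under $\sigma$ is unfolded as ``there is \emph{no} function $h$ with $\sigma\subseteq h$ and $h(\guard)\cap\Inst\neq\emptyset$,'' i.e.\ the condition is universally quantified over all constant-fixing extensions $h$ of $\sigma$. Under that corrected statement both directions go through by exactly the routine unfolding of the atom, negation, and conjunction clauses that you give -- applied to every extension rather than to one. So the right move is either to prove the corrected (negated-existential) statement, or to restrict the proposition to quantifier-free CNAs, where your argument as written is sound.
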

	This characterizations of guard matches to NCAs and CGs turns out to be quite useful in proofs of following sections.
\fi

\paragraph{Guard match, query answer and substitution.}
Guards and guard matches are very similar to queries and query answers in database systems.
However, whereas query answers should be \emph{domain independent} (i.e., having finitely many possible substitutions~\cite{abiteboul1995foundations}),
this does not necessarily hold for guard matches~\cite{abdulla2018complexity}.
For instance, query $\neg \pX$ would have infinitely many answers and is therefore prohibited, whereas
the corresponding guard simply checks if, for instance, a to-be-added person is not yet contained in the database.
\begin{example}\label{ex:guards}
	We consider a simplified social network (SSN)
	with two predicates,
	(1) a unary predicate $P(\textit{name})$ for \emph{persons} currently being members of the network with attributes $name$, and
	(2) a binary predicate $F(\textit{name}_1,\textit{name}_2)$ for a non-symmetric \emph{friendship} relation from person $\textit{name}_1$ to person $\textit{name}_2$. 
	We assume all possible strings denoting names to be part of \constS.
	A database of our SSN may be
	$\Dnst_e = \{\A, \B, \C, \fAB, \fBA, \fAC \}$ (with A, B and C may be Alice, Bob and Charles).
	Potential guards are
	\begin{itemize}
	\item a \underline{s}ymmetric \underline{f}riendship: $\guard_{sf} = \fXY \wedge \fYX$,
	\item a \underline{d}irected \underline{f}riendship: $\guard_{df} = \fXY \wedge \neg \fYX$,
	\item \underline{a} \underline{f}riendship from $x$ to someone: $\guard_{af} = \exists y.\fXY$, and
	\item \underline{n}o \underline{f}riendship: $\guard_{nf} = \neg \fXY \wedge \neg \fYX $.
	\end{itemize}
	On $\Dnst_e$ we obtain the following guard matches:
	\begin{itemize}
		\item $\guard_{sf}(\Dnst_e) = \{
				\{x\mapsto \text{A}, y \mapsto \text{B}\},
				\{x \mapsto \text{B}, y \mapsto \text{A}\}
			\}$,
			\item  $\guard_{df}(\Dnst_e) = \{
			\{x \mapsto \text{A}, y \mapsto \text{C}\}
		\}$,
		\item  $\guard_{sf}(\Dnst_e) = \{
			\{x \mapsto \text{A}\},
			\{x \mapsto \text{B}\}
		\}$, and
		\item  $\guard_{nf}(\Dnst_e) = \{
			\{x \mapsto \text{B}, y \mapsto \text{C}\},
			\{x \mapsto \text{C}, y \mapsto \text{B}\},
			\{x \mapsto \text{A}, y \mapsto \text{A}\},
			\dots
		\}$.
	\end{itemize}
	For instance $\Dnst_e, \{x \mapsto \text{A}, y \mapsto \text{B} \} \models \guard_{sf}$
	holds as Alice is a friend of Bob and Bob is a friend of Alice, whereas
	$\Dnst_e, \{x \mapsto \text{A}, y \mapsto \text{B} \} \models \guard_{df}$ does not hold.

\end{example}

\paragraph{Database Manipulating Systems.}
\REVISED[]{Database manipulating systems formalize 
possible sequences of \emph{actions} consecutively applied to database instances.
Syntactically, our formalization loosely follows the canonical notion
of actions used in the DMS formalism by Abdulla et al.~\cite{abdulla2018complexity}\xspace.}
An action consists of a \emph{guard} and an \emph{effect} on the current instance.
A guard specifies on which instances the action is applicable.
The effect might be deletion of atoms from the instance and adding new atoms to the instance.
Formally, the effect comprises two finite sets of atoms, $\setDel$ and $\setAdd$, such that $\vars(\setDel)\subseteq\freevars(\REVISED[\varphi]{\guard})$.
Atoms in $\setDel$ are determined by the match for guard $\guard$, while $\setAdd$ is a collection of new atoms.
Note that $\setDel$ and $\setAdd$ may contain variables that will be bound by (a) the guard matches and (b) by arbitrary constants in case of those variables in $\vars(\setAdd)\setminus\freevars(\REVISED[\varphi]{\guard})$.
The rationale behind case (b) is that an action inserting atoms may depend on external stimuli like sensor data or user input.
An action $\act$ is a triple $(\guard, \setDel, \setAdd)$ which forms the basis of a \emph{database manipulating system} (DMS).
\begin{definition}[Database Manipulating System]\label{def:dms}
A \emph{database manipulating system} (DMS) is a pair $\mathcal{S} = (\Inst_{0}, \actS)$ where $\Inst_{0}$ is the \emph{initial instance} and $\actS$ is a finite set of actions.
\end{definition}
From $\Inst_{0}$, \REVISED[]{any sequence} of actions $\act = (\guard,\setDel,\setAdd)\in\actS$ may be performed 
based on substitutions $\sigma$ due to matches of guard $\guard$.
Note that $\sigma$ specifies all variables occurring in $\setDel$.
We denote by $\setDel\sigma$ the set obtained by replacing all occurrences of variables $x\in\vars(\setDel)$ by $\sigma(x)$.
\REVISED{In general, for a set of atoms $\mathcal{A}$ and substitution $\sigma$, $\mathcal{A}\sigma$ is the set of atoms in which each variable $x\in\vars(\mathcal{A})$ has been replaced by $\sigma(x)$ if it is defined for $\sigma$.}
Set $\setAdd$ may contain variables which are not in $\domain(\sigma)$
such that $\setAdd\sigma$ is not a proper database \REVISED[or instance]{}.
To facilitate arbitrary external inputs, we expand $\sigma$ to the missing variables.
Substitution $\sigma^{\star}$ extends $\sigma$ to $\setAdd$ if $\domain(\sigma^{\star})=\setAdd$ and $\sigma\subseteq\sigma^{\star}$.
\REVISED{Extending $\sigma$ to $\sigma^{\star}$ completes a step by deletions $\setDel\sigma^{\star}$ from and additions $\setAdd\sigma^{\star}$ to the current instance.}
\begin{definition}[DMS Step]\label{def:dms-step}
 A DMS action $\act = (\guard,\setDel,\setAdd)$ is \emph{enabled under instance $\Inst$ and substitution $\sigma$}, denoted $\Inst\enabled{\act,\sigma}$, if $\sigma\in \guard(\Inst)$.
 If $\Inst\enabled{\act,\sigma}$, then an effect is an extension $\sigma^{\star}$ of $\sigma$ to $\setAdd$, producing instance $\Inst' = (\Inst \setminus \setDel\sigma^{\star}) \cup \setAdd\sigma^{\star}$.
  We denote the \emph{DMS step} from $\Inst$ to $\Inst'$ via $\act$ and $\sigma$ by $\Inst\firing{\act,\sigma^{\star}}\Inst'$.
\end{definition}

\begin{example}
\label{ex:dms_action}
	Action $\act_{\textit{add}} = (\textit{true}, \emptyset, \{\pX\})$ (adding a new person)
	is enabled under each instance even if the person already exists in that instance.
	Thus, $\sigma$ is empty and $\sigma^{\star}$ may be, e.g., $\{x \mapsto \text{A}\}$.
	Action $\act_{\textit{rev}} = (\guard_{df},  \{\fXY\}, \{\fYX\})$
	checks if a directed friendship exists between $x$ and $y$, 
	deletes this friendship and adds the reversed friendship.
\end{example}
The formal semantics of a DMS is defined as a \emph{labeled transition system} (LTS).
\begin{definition}[Labeled Transition System]\label{def:lts}
A \emph{labeled transition system} (LTS) is a triple $\mathcal{T}=(Q,\Sigma,\transrel{})$ where $q$ is a set of states (processes), $\Sigma$ is a set of transition labels, and $\transrel{} \subseteq Q \times \Sigma \times Q$ a transition relation.
We denote $(q,a,q')\in\transrel{}$ as $q \transrel{a} q'$ and write $q \transrel{a}$ if $\exists q'\in Q : q \transrel{a} q'$
and $q\not\transrel{a}$ if not $q\transrel{a}$.
\end{definition}
LTS $\mathcal{T} = (Q,\Sigma,\transrel{})$ is
\begin{inparaenum}[(a)]
	\item \emph{finitely branching} if for each $q\in Q$, the set $\{ q'\in Q \mid \exists a \in \Sigma : q \transrel{a} q' \}$ is finite,
  \item \emph{image-finite} if for each $q\in Q$ and $a\in \Sigma$, the set $\{ q'\in Q \mid q \transrel{a} q' \}$ is finite,
	\item \emph{finite-state} if $Q$ is finite, and
	\item \emph{deterministic} if for each state $q\in Q$ and $a\in\Sigma$, $q\transrel{a} q'$ and $q\transrel{a} q''$ implies $q'=q''$.
\end{inparaenum}
Although LTSs may be directly associated with directed edge-labeled graphs, comparison relations
based on graph homomorphisms are too strong to capture distinctive features of LTS processes.
Instead, \emph{simulation} and \emph{bisimulation} relations on processes are used.
Intuitively, process $q$ simulates $p$ if every action that may be performed by $p$
can be mimicked by $q$ and the successor states again simulate each other.

\begin{definition}[(Bi-)Simulation]\label{def:sim}
For an LTS $(Q,\Sigma,\transrel{})$, a binary relation $R\subseteq Q\times Q$ is 
a \emph{simulation} if for all $(p,q)\in R$ and $a\in\Sigma$, $p\transrel{a} p'$ implies that $q'\in Q$ exists such that $q\transrel{a} q'$ and $(p',q')\in R$.
Process $q\in Q$ \emph{simulates} process $p\in Q$ if there is a simulation $R$ with $(p,q)\in R$.
If $p$ simulates $q$ \REVISED{by simulation $R$}, and $q$ simulates $p$ \REVISED{by simulation $R'$}, then $p$ and $q$ are \emph{similar}.
Simulation $R$ is a \emph{bisimulation} if, and only if, $R^{-1} := \{ (q,p) \mid (p,q)\in R \}$ is also a simulation.
If there is a bisimulation $R$, such that $(p,q)\in R$, then $p$ and $q$ are \emph{bisimilar}.
\end{definition}
\REVISED{Note, the witnesses $R$ and $R'$ for similarity are not necessarily bisimulations as possibly $R^{-1}\neq R'$.}

DMS semantics can be formalized as an LTS 
	$\textsf{DMS} := (\univ{\constS}, \actsigmaS, \transrel{})$
where $\transrel{}$ is formed by $\Inst_{1} \transrel{\actsigma} \Inst_{2}$ if, and only if, $\Inst_{1} \firing{\act,\sigma} \Inst_{2}$ (cf.\xspace Def.~\ref{def:dms-step}).
In general, $\textsf{DMS}$ is infinitely branching, infinite-state, and deterministic. \REVISED[, and image-finite.]{}

$\textsf{DMS}$ builds the basis for investigating
desirable properties of all possible processes defining a DMS.
For instance, the \emph{reachability problem} 
asks for a given $\textsf{DMS}$ and a distinguished action $\act_x$,
if there is an instance $\Inst_{x}$ with $\Inst_{0} \transrel{} \Inst_{1} \transrel{} \dots \transrel{} \Inst_{x}$ 
such that action $\act_x$ is enabled under $\Inst_{x}$.
The reachability problem is undecidable for DMS~\cite{abdulla2018complexity}\xspace.
\ifproof
	\begin{example}\label{example:reachability}
		We introduce a new binary predicate $W(w_1,w_2)$ with $w_1$ and $w_2$ being words 
		over $\constS^*$ where $W(\epsilon,\epsilon)$ holds for the initial database.
		For each friendship predicate $\fXY$ we consider a set of DMS actions of the form
		$\act_{\fXY} := (W(w_1,w_2), W(w_1,w_2), W(w_1 \circ x, w_2 \circ y))$
		where $\circ$ denotes concatenation.
		We consider the DMS action $\act_{\textit{end}} = (W(w_1,w_2) \wedge w_1 = w_2, \emptyset, \emptyset)$.
		Then, the reachability problem with respect to $\act_{\textit{end}}$
		is undecidable as it can be reduced to Post's correspondence problem.
	\end{example}
\fi
The next example constitutes a semi-decidable reachability problem.
\begin{example}\label{example:semi-reachability}	
Given a predefined set of persons and actions
for consecutively adding and deleting friendships between arbitrary pairs of persons,
do we eventually reach a database containing a triangle friendship between three different persons
(i.e., $x$ a friend of $y$, $y$ a friend of $z$ and $z$ a friend of $x$)?
To this end, we expand the unary predicate $P(\textit{name})$ to a binary 
predicate $P(\textit{name},\textit{name})$ and use 
NCG to define a guard $P(x,x) \wedge P(y,y) \wedge \neg P(x,y)$. $\neg P(x,y)$
(i.e., ensuring that $x$ and $y$ match different persons).
This is a standard technique to avoid $\neq$ in first order formulas.
Starting from an arbitrary database, we consider two actions:
$\act_{add} := ( P(x,x) \wedge P(y,y) \wedge \neg P(x,y) , \emptyset,\fXY)$ 
(adding a friendship) and $\act_{delete} := (\fXY, \fXY, \emptyset)$
(deleting a friendship) and ask for reachability of the action 
$\act_{end} = (\exists x,y,z. \fXY \wedge \fYZ \wedge \fZX \wedge \neg P(x,y) \wedge \neg P(y,z) \wedge \neg P(z,x), \emptyset, \emptyset)$.
\end{example}
A finite solution to this problem comprises an \emph{abstract LTS} with four states, where each of those abstract states contains all subsets of databases with 
(1) no friendships, (2) friendship chains 
of maximum length $\leq 2$, (3) friendship chains of maximum length $> 2$ without any triangles, and (4) at least one triangle.

In the remainder of this paper, we develop a hierarchy of abstract domains
to characterize semantic-preserving abstractions of states of \textsf{DMS} depending
on the expressiveness of the guard fragment used.
Our approach is based on the formal framework of abstract interpretation.

\section{Principles of Abstract Interpretation}\label{sec:abst-intpre}
Before we present our abstract interpretation framework for DMS, we first describe its basic ingredients.
Different processes assembled in $\mathsf{DMS}$ may share similar behavior in terms
of their enabled actions and subsequent processes.
For instance, let us consider a DMS action which inserts a friendship between
Alice and Bob, where the guard of this action consists of a conjunction of atoms requiring 
Alice and Bob to exist in the database.
All (i.e., countably infinitely many) \emph{concrete states} matching this guard may be aggregated 
into \emph{one} single abstract state. 
The concrete states aggregated in the subsequent abstract state
reached after performing this action then all share 
the inserted relationship between Alice and Bob.
The way how the concrete states are aggregated into,
and reconstruction from, such an abstract state clearly depends
on the guard fragment used.
In addition, DMS states are infinitely branching due to
the ability of DMS actions to insert any possible new value.
However, in many cases, the exact values are often not relevant
for reasoning about the subsequent behavior and can therefore 
be aggregated into one representative abstract value.
The following definitions are based on Dams et al.~\cite{dams1997abstract} and 
conceptualize these observations.

\paragraph{Lattice.}
Abstract interpretation provides a framework
for effectively reasoning about computational models over infinite semantic domains modeled as lattices.
By $\sqcap$ and $\sqcup$ we denote binary operations on sets $S$.
The operators $\sqcap$ and $\sqcup$ are monotone with respect to a partial order $\leq$ on $S$ 
(i.e., $x_1,x_2,y_1,y_2 \in S$, $x_1 \leq x_2$ and $y_1 \leq y_2$ implies  
$x_1 \sqcap y_1 \leq x_2 \sqcap y_2$ and  $x_1 \sqcup y_1 \leq x_2 \sqcup y_2$).
\begin{definition}[Lattice]\label{def:lattice}
	A \emph{lattice} is a partially ordered set $(S, \leq)$ such that each two-element subset $\{x,y\} \subseteq S$ has
	(1) a unique least upper bound in $S$, denoted by $x \sqcup y$, and
	(2) a unique greatest lower bound in $S$, denoted by $x \sqcap y$.
\REVISED[A \emph{bounded lattice} has a unique greatest element $\top$ and a unique least element $\bot$ such that
$\top \geq \bigsqcup_{x \in S} x$, $\bot \leq \bigsqcap_{x \in S} x$ and
$\bot \leq x \leq \top$ for all $x \in S$. 
$\top$ and $\bot$ are identities for $\sqcap$ and $\sqcup$.]{}
\ifproof
A \emph{bounded lattice} has a unique greatest element $\top$ and a unique least element $\bot$ such that
$\top \geq \bigsqcup_{x \in S} x$, $\bot \leq \bigsqcap_{x \in S} x$ and
$\bot \leq x \leq \top$ for all $x \in S$. 
$\top$ and $\bot$ are identities for $\sqcap$ and $\sqcup$.
\fi
\end{definition}
Bounded lattices are not further deployed in the following 
but are mentioned here only for the sake of comprehensibility.
Abstract interpretation aims at 
establishing connections between lattices modeling different semantic domains.

\paragraph{Galois Connection.}
By $(\conDom,\sqsubseteq)$ we denote a concrete semantic domain where
$\conDom=2^{Q}$ comprises the set of all subsets of concrete sets of states $Q$ 
of a computational model (here: $\mathsf{DMS}$).
By $\sqsubseteq$ we denotes a partial (semantic) ordering on $\conDom$ (here: $\subseteq$).
By $(\absDom,\preceq)$ we denote an abstract semantic domain
where $\absDom$ is a set of abstract states and $\preceq$ a partial (precision) ordering on $\absDom$.
It is crucial that the elements of the concrete domain $\conDom$ are possible \emph{subsets} of concrete states, 
whereas the elements of the abstract domain $\absDom$ are \emph{singleton} abstract states.
The mutual connection between concrete and abstract domain
is shaped by a pair of abstraction function $\alpha : \conDom \to \absDom$ and a concretization function 
$\gamma : \absDom \to \conDom$, together forming a \emph{Galois connection}.
\begin{definition}[Galois Connection]\label{def:absint}
	The pair $(\alpha : \conDom \rightarrow \absDom, \gamma : \absDom \rightarrow \conDom)$ is a \emph{Galois connection}
	between lattices $(\conDom,\sqsubseteq)$ and $(\absDom,\preceq)$ if
	(1) $\alpha$ and $\gamma$ are \emph{total} and \emph{monotone},
	(2) $\forall C\in \conDom:\gamma\circ\alpha(C)\sqsupseteq C$, and
	(3) $\forall a\in \absDom:\alpha\circ\gamma(a)\preceq a$.
\end{definition}
Monotonicity guarantees that more precise abstractions 
single out fewer concrete states and, conversely, abstracting larger sets of concrete states 
yields less precise abstractions.
Furthermore, (2) requires that concrete states are preserved after reconstruction.
Finally, (3) requires a form of optimality of the abstraction thus not decreasing precision. 
\ifconf
\paragraph{Bisimulation.}
\REVISED[We may lift the notion of (bi-)simulation to steps $C \transrel{a} C'$ between sets of concrete states 
as apparent in the concrete domain in two ways: either (1) there is at least one process $q\in C$ with $q\transrel{a} q'$ and $q'\in C'$,
or (2) all processes in $C$ evolve to some process in $C'$ by action $a$.
We may refer to case (1) as an $\exists$-step and to case (2) as a $\forall$-step,
and adapt the notions of (bi-)simulation to the these new types of steps.]
{
Lifting (bi-)simulations to steps $C \transrel{a} C'$ between sets of concrete states, 
as apparent in the concrete domain, amounts to all databases $q\in C$ evolving to $q'\in C'$ via $q\transrel{a} q'$. 
We refer to these as $\forall$-steps
and adapt the notions of (bi-)simulations accordingly.
}
\begin{definition}[$\forall$-(bi-)simulation]\label{def:bisim}\label{def:forall-sim}
	For abstract domain $(\absDom,\preceq)$ and concrete domain $(\conDom,\sqsubseteq)$,
	a binary relation $R \subseteq \absDom \times \conDom$ 
	is a \emph{$\forall$-simulation} 
	 if for all $(\Inst[A], C) \in R$ and $a \in \actS$,
	$\Inst[A] \transrel{a} \Inst[A]'$ implies that there is a $C'\in \conDom$ such that (a) $C \transrel{a} C'$ with a $q\in C$ for each $q'\in C'$ such that $q\transrel{a} q'$, and (b) $(\Inst[A]',C')\in R$ and
	(c) for each $q\in C$ there is a $q'\in C'$ with $q\transrel{a} q'$.
\end{definition}
If $(\Inst[A],C)\in R$ and $R$ is a $\forall$-simulation, we say that $C$ $\forall$-simulates $\Inst[A]$.
By reversing the conditions of $\forall$-simulations, we get $\forall$-simulations between the concrete domain and the abstract domain (i.e., $R\subseteq\conDom\times\absDom$).
Naturally, a $\forall$-simulation $R$ is called a $\forall$-bisimulation if, and only if, $R^{-1}$ is a $\forall$-simulation.
\REVISED{We call $\Inst[A]$ and $C$ $\forall$-bisimilar if, and only if, a $\forall$-bisimulation between $\Inst[A]$ and $C$ exists.}
\REVISED[In this paper we restrict our considerations to $\forall$-bisimulations
which preserves reachability properties as in our running example.
Analogously,  $\exists$-bisimulation preserve safety properties and will be considered as a future work.]{}

\REVISED[Please note that the different system types (single instances vs.\xspace sets) introduces a slight asymmetry into the notions of $\forall$-(bi-)simulations.]{}
\REVISED[Although every $\forall$-simulation $R\subseteq\absDom\times\conDom$ is an $\exists$-simulation, a $\forall$-simulation $R\subseteq\conDom\times\absDom$ does not necessarily adhere to the requirements of an $\exists$-simulation.]{}
\fi

\ifproof
\paragraph{Bisimulation.}
We lift (bi-)simulations to steps $C \transrel{a} C'$ between sets of concrete states 
as apparent in the concrete domain in two ways: either (1) there is at least one process $q\in C$ with $q\transrel{a} q'$ and $q'\in C'$,
or (2) all processes in $C$ evolve to some process in $C'$ by action $a$.
We may refer to case (1) as an $\exists$-step and to case (2) as a $\forall$-step,
and adapt the notions of (bi-)simulation to the these new types of steps.
\begin{definition}[$\forall$/$\exists$-(bi-)simulation]\label{def:bisim}\label{def:forall-sim}
	For abstract domain $(\absDom,\preceq)$ and concrete domain $(\conDom,\sqsubseteq)$,
	a binary relation $R \subseteq \absDom \times \conDom$ 
	is an \emph{$\exists$-simulation} if for all $(\Inst[A], C) \in R$ and $a \in \actS$,
			$\Inst[A] \transrel{a} \Inst[A]'$ implies that there is a $C'\in \conDom$ such that (a) $C \transrel{a} C'$ with a $q\in C$ for each $q'\in C'$ such that $q\transrel{a} q'$, and (b) $(\Inst[A]',C')\in R$.
	$R$ is a \emph{$\forall$-simulation} if, additionally to being an $\exists$-simulation,
	for each $q\in C$ there is a $q'\in C'$ with $q\transrel{a} q'$.
\end{definition}
If $(\Inst[A],C)\in R$ and $R$ is a $\forall$/$\exists$-simulation, we say that $C$ $\forall$/$\exists$-simulates $\Inst[A]$.
By reversing the conditions of $\forall$/$\exists$-simulations, we get $\forall$/$\exists$-simulations between the concrete domain and the abstract domain (i.e., $R\subseteq\conDom\times\absDom$).
Naturally, a $\forall$/$\exists$-simulation $R$ is called a $\forall$/$\exists$-bisimulation if, and only if, $R^{-1}$ is a $\forall$/$\exists$-simulation.
In this paper we restrict our considerations to $\forall$-bisimulations
which preserves reachability properties as in our running example.
Analogously,  $\exists$-bisimulation preserve safety properties and will be considered as a future work.

Please note that the different system types (single instances vs.\xspace sets) introduces a slight asymmetry into the notions of $\forall$/$\exists$-(bi-)simulations.
Although every $\forall$-simulation $R\subseteq\absDom\times\conDom$ is an $\exists$-simulation, a $\forall$-simulation $R\subseteq\conDom\times\absDom$ does not necessarily adhere to the requirements of an $\exists$-simulation.
The reason is that $(C,\Inst[A])$ of a $\forall$-simulation $R\subseteq\conDom\times\absDom$ considers only steps $C \transrel{a} C'$ that are complete (i.e., every database in $C$ evolves) while if $R$ is viewed as an $\exists$-simulation, only one database from $C$ may evolve, say to some singleton set $C''$ which is not necessarily captured by $R$.
Therefore, the $\forall$-bisimulation results we obtain throughout the next section do not entail respective $\exists$-bisimulation results.
\fi

\paragraph{Abstract Interpretation Framework.}
The remainder of this paper is devoted to a hierarchy of concrete domains $(\mathbf{2}^{\univ{}}, \subseteq)$
for \textsf{DMS} processes shaped by different fragments of FOL \REVISED[in the guard]{as guard language}, where
the functions $\gamma$ and $\alpha$ are either based on the 
supremum or infimum of the corresponding abstract domains.
\REVISED{For a guard language $\mathcal{L}$, we call a Galois connection $(\alpha,\gamma)$ an \emph{abstract interpretation w.r.t.\xspace $\mathcal{L}$} if for each set of databases $C\subseteq\univ{}$ and set of DMS actions $\actS$ using only guards from $\mathcal{L}$, $\alpha(C)$ and $C$ are $\forall$-bisimilar.}
\section{Abstract Interpretation of DMS}\label{sec:abstract_interpretation_of_DMS}

The concrete domain is fixed: $(\mathbf{2}^{\univ{}}, \subseteq)$.
For (possibly infinite) sets $C$ of databases, we effectively present six different abstractions:
The first two very basic ones are based on (set) union and intersection.
The third abstraction is a (Cartesian) combination of the two prior abstractions with the benefit of supporting a more practical guard fragment.
One caveat \REVISED[to mention]{} about these abstractions \REVISED{is that we have} to waive projection (i.e., existential quantification).
To gain DMS actions with more expressive guards, and thereby capture more realistic systems, we devise abstractions for more general abstract domains.  
We expand our abstract domain incorporating so-called \emph{labeled nulls} as terms in abstract instances.
The order on the abstract domain is then based on homomorphisms.
The three remaining abstractions are complements of the first three, now in the more abstract domain incorporating labeled nulls.
Table~\ref{fig:abstract_domains_and_guard_languages} summarizes our results.
\begin{table}[tbp]
	\caption{Abstract Domains, Interpretations, and Respective Guard Fragments}\label{fig:abstract_domains_and_guard_languages}
\centering
\begin{tabular}{ lllrr }
	\toprule
	abstract domain ~~~~& 
	$\alpha(C)$~~~~~~~~~~~~~~~~~~& 
	$\gamma(\Inst)$~~~~~~~~~~~~~~~~~~~~& 
	~~~~fragment &
	~~~~~~~~~~~~~~~~~~\\ 
	\midrule 
		$(\univi{} ,\subseteq)$ &
		$\bigcup C$ &
		$  \{ \Dnst \subseteq \Inst \}$ &  
		CNA &
		Theorem~\ref{theorem:union_abstraction_forall} \\ 
		$(\univi{} ,\supseteq)$  & 
		$\bigcap C$ & 
		$ \{ \Inst \subseteq \Dnst\}\ $ & 
		pf-CG&
		Theorem~\ref{theorem:intersection_abstraction_forall}\\
		$(\univi{} \times \univi{},\leq)$  & 
		$(\bigcap C, \bigcup C)$ & 
		$ \{ \Inst \subseteq \Dnst \wedge \Dnst \subseteq \Inst \}\ $ & 
		pf-NCG &
		Theorem~\ref{theorem:combination_abstraction_forall}\\
		$(\univi{\nullS} ,\rightarrow)$ & 
		$\bigsqcup C$ & 
		$\{\Dnst \to \Inst\}$ & 
		CNA&
		Theorem~\ref{theorem:union_abstraction_null_forall} \\
		$(\univi{\nullS} ,\leftarrow)$ & 
		$\bigsqcap C$ & 
		$\{\Inst \to \Dnst\}$ &
		CG &
		Theorem~\ref{theorem:intersection_abstraction_null_forall} \\
		$(\univi{\nullS} \times ,\univi{\nullS}, \preceq)$ & 
		$(\bigsqcap C, \bigsqcup C)$ & 
		$\{\Inst \to \Dnst \wedge \Dnst \to \Inst\}$ &
		NCG &
		Theorem~\ref{theorem:combination_abstraction_null_forall}\\
		\bottomrule
\end{tabular}
\end{table}

The rest of this section is structured as follows.
First, we introduce a naive set-based abstraction based on the set union operator on databases together with a summary of further set-based abstractions.
Resolving the issue of neglecting variable projections in guards we introduce instances with labeled nulls, on which CGs can be used without losing precision.
Finally, we combine unions and intersections to even support DMS actions with NCGs.
\ifconf
Other abstractions are mentioned in results only, whereas our extended paper provides further explanations and proofs~\cite{arxiv2023report}.
\fi
\ifproof
Other abstractions are mentioned. All proofs are attached in the appendix.
\fi

\subsection{Set-Based Abstractions: The Case of Union}
As a first and very basic abstraction we study $\bigcup C$ of any set $C\in\mathbf{2}^{\univ{}}$ of databases.
If $C$ is infinite, $\bigcup C$ is infinite as well, meaning that $\bigcup C$ is captured in $\univi{}$.
Henceforth, we facilitate $\bigcup C$ via the abstraction function $\alpha_{1} : \mathbf{2}^{\univ{}} \to \univi{}$ with 
$\alpha_{1}(C)$.
\begin{align}
	\alpha_{1}(C) := \bigcup C
	&  & 
	\gamma_{1}(\Inst) := \{ \Dnst \subseteq \Inst \mid \text{$\Dnst$ is a database} \} \label{eq:alphagamma_union}
\end{align}
The natural choice for the abstract domain is, thus, $(\univi{}, \subseteq)$ because the more databases $C$ contains, the bigger the abstract instance is (cf.\xspace Def.~\ref{def:absint} item 1).
The counterpart concretization function $\gamma_{1} : \univi{} \to \mathbf{2}^{\univ{}}$ is determined by $\alpha_{1}$: While $\alpha_{1}$ forms the union of all databases contained in a set of databases $C$, an abstract instance then describes all databases that are (finite) subsets of the abstract instance. $\gamma_{1}(\Inst)$ is defined in (\ref{eq:alphagamma_union}).

Databases are finite by definition, implying that if $\Inst$ is infinite, $\Dnst\subsetneq\Inst$ for every $\Dnst\in\gamma_{1}(\Inst)$.
The functions in (\ref{eq:alphagamma_union}) make up for a Galois connection.
\begin{proposition}\label{theorem:union_galois_connection}
  $(\alpha_{1},\gamma_{1})$ is a Galois connection.
\end{proposition}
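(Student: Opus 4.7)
The plan is to verify the three clauses of Definition~\ref{def:absint} directly, in the order stated there. Since the concrete order is $\subseteq$ on $\mathbf{2}^{\univ{}}$ and the abstract order is $\subseteq$ on $\univi{}$, every condition unfolds into an elementary set-theoretic containment, so the proof is essentially a diagram chase through the definitions of $\alpha_1$ and $\gamma_1$ in \eqref{eq:alphagamma_union}.

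First I would dispatch totality and monotonicity. Totality is immediate: $\alpha_1(C)=\bigcup C$ is a well-defined (possibly infinite) set of ground atoms, hence an instance, for every $C\in\mathbf{2}^{\univ{}}$; dually, for every instance $\Inst$ the collection $\{\Dnst\subseteq\Inst\mid\Dnst\text{ database}\}$ is a well-defined subset of $\univ{}$. Monotonicity of $\alpha_1$ follows because $C_1\subseteq C_2$ implies $\bigcup C_1\subseteq\bigcup C_2$. Monotonicity of $\gamma_1$ follows because if $\Inst_1\subseteq\Inst_2$ then any database $\Dnst\subseteq\Inst_1$ automatically satisfies $\Dnst\subseteq\Inst_2$, hence $\gamma_1(\Inst_1)\subseteq\gamma_1(\Inst_2)$.

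Next I would check the extensivity condition $\gamma_1\circ\alpha_1(C)\sqsupseteq C$, i.e.\ $C\subseteq\gamma_1(\bigcup C)$. Pick any $\Dnst\in C$. By assumption $\Dnst$ is a database and obviously $\Dnst\subseteq\bigcup C=\alpha_1(C)$, so $\Dnst\in\gamma_1(\alpha_1(C))$ by the very definition of $\gamma_1$. Finally, I would verify the reductivity condition $\alpha_1\circ\gamma_1(\Inst)\preceq\Inst$, i.e.\ $\bigcup\{\Dnst\subseteq\Inst\mid\Dnst\text{ database}\}\subseteq\Inst$. This is immediate: every $\Dnst$ in the union is by construction a subset of $\Inst$, so the union is too.

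There is no real obstacle here; the only point that warrants a sentence of care is to make explicit that $\gamma_1(\Inst)$ is nonempty even when $\Inst$ is infinite (take, e.g., $\emptyset$ or any finite singleton $\{a\}$ with $a\in\Inst$), so that the reductivity step is not vacuous and that in fact $\alpha_1(\gamma_1(\Inst))=\Inst$ because every atom $a\in\Inst$ lies in the database $\{a\}\subseteq\Inst$. This equality is stronger than what Definition~\ref{def:absint}(3) demands, but recording it makes the asymmetry with the other abstractions treated later (where strict inequality can occur) transparent.
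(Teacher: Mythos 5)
Your proof is correct, and every step checks out: totality, monotonicity of both maps, $C\subseteq\gamma_1(\alpha_1(C))$, and $\alpha_1(\gamma_1(\Inst))\subseteq\Inst$ all hold for exactly the reasons you give, and your extra observation that $\alpha_1\circ\gamma_1(\Inst)=\Inst$ (via the singleton databases $\{a\}$ for $a\in\Inst$) is also right. The paper, however, does not prove this proposition in isolation: it establishes a single general lemma stating that for any complete lattice $(\absDom,\sqsubseteq)$ containing $\univ{}$, the pair $\alpha(C):=\bigsqcup C$ and $\gamma(\Inst[A]):=\{\Dnst\in\univ{}\mid\Dnst\sqsubseteq\Inst[A]\}$ is a Galois connection, and then obtains Propositions~\ref{theorem:union_galois_connection}--\ref{prop:galois_null_combination} as corollaries (for $(\alpha_1,\gamma_1)$ the lattice is $(\univi{},\subseteq)$, where $\bigsqcup$ is set union). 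Your argument is precisely the instantiation of that general proof to this concrete case, so the underlying reasoning is identical; what the paper's route buys is that the same three-line verification also covers the intersection-based and labeled-null abstractions without repetition, while your direct version buys concreteness and the sharper conclusion $\alpha_1\circ\gamma_1=\mathrm{id}$, which the lattice-level argument only delivers as an inequality.
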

For $C\in\mathbf{2}^{\univ{}}$, we are interested in the behavioral properties of the abstraction $\alpha_{1}(C)$.
Therefore, observe that for every database $\Dnst\in C$, $\Dnst \subseteq \alpha_{1}(C)$.
Thus, guards asking for the absence of atoms will have the same matches on all the databases in $C$ as well as the abstraction $\alpha_{1}(C)$.
\begin{example}
	We analyze two guards $\guard_{nf}$ (absence of a friendship) and $\guard_{sf}$ (presence of a symmetric friendship) from example~\ref{ex:guards} on $C$ and $\Inst$ with	
	$C =\{ 
		\{\A,$ $\B,\fAB,\fBA\}, 
		\{\A,\B,\C\}
	\}$ and
	$\Inst = \bigcup C = \{\A,\B,$ $\C, \fAB, \fBA\}$.	
	If a friendship is absent in each database of $C$, this friendship is also absence in $\Inst$ (i.e., the union of all databases of $C$). If a friendship is absent in $\Inst$ this friendship is also absent in each database of $C$.
	In contrast, the presence of a symmetric friendship like $\fAB, \fBA$ holds for $\Inst$ but not for each database in $C$.

	The guard $\guard_{nf} = \forall y. \neg \fXY$ ensures the absence of all friendships of a person $x$ through the universal quantifier. $\guard_{nf}$ behaves similar to $\guard_{nf}$.
	The behavior of the existential quantifier is conversely.
	For instance,   $\guard_{ex} =  \exists x. \neg \pX$ holds for each database as databases are finite but the set of all constants is infinite.
	In contrast, if set $C$ is infinite and for each constant $c$, $P(c)$ is contained in some database in $C$,
	$\Inst = \bigcup C$ does not satisfy $\guard_{ex}$.
\end{example}
As the examples show, $\alpha_{1}(C)$ may enable DMS actions with conjunctive guards that are not enabled by some, or any, of the concrete databases in $C$. %
Thus, $\alpha_{1}(C)$ captures the behavior of all databases in $C$ if we choose CNA guards.
\begin{theorem}\label{theorem:union_abstraction_forall}
	\REVISED[The Galois connection ]{}$(\alpha_{1},\gamma_{1})$ is an abstract interpretation w.r.t.\xspace CNA guards\REVISED[ and $\forall$-bisimilarity]{}.
\end{theorem}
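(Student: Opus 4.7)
The plan is to exhibit the relation
\[ R = \{(\alpha_{1}(C'), C') \mid C' \subseteq \univ{},\; C' \neq \emptyset\} \]
and to verify that both $R$ and $R^{-1}$ satisfy the conditions of Definition~\ref{def:forall-sim} whenever every guard used in $\actS$ is a CNA. Since the Galois-connection property is already supplied by Proposition~\ref{theorem:union_galois_connection}, the only remaining obligation is behavioural.

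The linchpin is the following matching lemma, which I would establish first: for every CNA guard $\guard$ and every nonempty $C \subseteq \univ{}$,
\[ \guard(\alpha_{1}(C)) \;=\; \bigcap_{\Dnst \in C} \guard(\Dnst). \]
Because $\guard$ has the shape $\forall \vec y.\ \neg a_{1} \wedge \cdots \wedge \neg a_{m}$, a match $\sigma$ in an instance $\Inst$ asserts that no ground atom $a_{i}\sigma^{\sharp}$ lies in $\Inst$, for any extension $\sigma^{\sharp}$ of $\sigma$ to $\vec y$. Since an atom is absent from $\alpha_{1}(C) = \bigcup C$ iff it is absent from every $\Dnst \in C$, the two sides of the equality coincide. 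As an auxiliary identity I would also record that for any finite $D, A$ and any nonempty $C$,
\[ \bigcup_{\Dnst \in C} \bigl((\Dnst \setminus D) \cup A\bigr) \;=\; \bigl(\alpha_{1}(C) \setminus D\bigr) \cup A, \]
which follows by routine element-chasing.

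With these two ingredients the bisimulation argument is short. For the forward direction, suppose $\alpha_{1}(C) \firing{\act, \sigma^{\star}} \Inst[A]'$ where $\act = (\guard, \setDel, \setAdd)$. The matching lemma lifts the abstract match to a match in every $\Dnst \in C$, so taking $C' = \{(\Dnst \setminus \setDel\sigma^{\star}) \cup \setAdd\sigma^{\star} \mid \Dnst \in C\}$ satisfies clauses (a) and (c) of Definition~\ref{def:forall-sim} by construction, while the set-update identity gives $\alpha_{1}(C') = \Inst[A]'$, so $(\Inst[A]', C') \in R$. The reverse direction mirrors this: any concrete set-step $C \firing{\act, \sigma^{\star}} C'$ witnesses, via clause (c), a match of $\guard$ in every $\Dnst \in C$, which lifts by the matching lemma to a match in $\alpha_{1}(C)$; the same identity identifies the abstract successor with $\alpha_{1}(C')$.

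The main obstacle is the matching lemma itself, since it pinpoints why the abstraction is behaviourally sound only for CNA: a single positive atom in a guard already allows $\bigcup C$ to witness a match that no individual $\Dnst$ provides, and an unbounded existential has the same defect. A minor bookkeeping point is the degenerate case $C = \emptyset$, where $\alpha_{1}(\emptyset) = \emptyset$ vacuously satisfies every CNA guard while the concrete side admits no step; restricting $R$ to nonempty $C$ as above sidesteps this corner.
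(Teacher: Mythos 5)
Your proposal is correct and takes essentially the same route as the paper: the same bisimulation relation $R=\{(\alpha_{1}(C),C)\mid C\subseteq\univ{}\}$, the same key observation that a CNA match transfers between $\bigcup C$ and every $\Dnst\in C$ because an atom is absent from the union iff it is absent from every member (the paper argues this inline via the functional characterization of NCA matches instead of isolating it as a matching lemma), and the same set-update identity, which is the paper's equation~\eqref{eq:alpha2inst}. The only cosmetic difference is your restriction of $R$ to nonempty $C$, which the paper does not bother with.
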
	
Similarly, we obtain an abstraction framework based on intersection of all the databases contained in set $C$ of concrete databases.
\begin{theorem}\label{theorem:intersection_abstraction_forall}
	\REVISED[The ]{}Galois connection $(\alpha_{2},\gamma_{2})$ with $\alpha_2(C) := \bigcap C$ and $\gamma_2(\Inst) := \{ \Dnst\in\univ{} \mid \Inst \subseteq \Dnst  \}$ is an abstract interpretation for pf-CGs\REVISED[ and $\forall$-bisimilarity]{}.
\end{theorem}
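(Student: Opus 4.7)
The plan is to mirror the strategy used for the union abstraction $(\alpha_{1},\gamma_{1})$ in Proposition~\ref{theorem:union_galois_connection} and Theorem~\ref{theorem:union_abstraction_forall}, now on the reversed abstract lattice $(\univi{},\supseteq)$ in which larger intersections correspond to more precise abstractions. First I would verify the three Galois-connection conditions. Monotonicity is immediate: $C_1 \subseteq C_2$ yields $\bigcap C_1 \supseteq \bigcap C_2$, and $\Inst_1 \supseteq \Inst_2$ yields $\gamma_2(\Inst_1) \subseteq \gamma_2(\Inst_2)$. Inflation holds since every $\Dnst \in C$ satisfies $\alpha_2(C) \subseteq \Dnst$, hence $\Dnst \in \gamma_2(\alpha_2(C))$ and $C \subseteq \gamma_2(\alpha_2(C))$. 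Optimality amounts to $\bigcap\{\Dnst : \Inst \subseteq \Dnst\} \supseteq \Inst$, which in the reversed abstract order is exactly $\alpha_2(\gamma_2(\Inst)) \preceq \Inst$.

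The core of the theorem is $\forall$-bisimilarity. I would take $R = \{(\alpha_2(C), C) : C \in \mathbf{2}^{\univ{}}\}$ as candidate relation, exploiting the shape of pf-CGs $\guard = a_1 \wedge \ldots \wedge a_m$ (no quantifier, no negation): $\sigma \in \guard(\Inst)$ holds precisely when $a_i\sigma \in \Inst$ for every $i$. This yields the pivotal equivalence
\[
\sigma \in \guard(\alpha_2(C)) \iff \forall \Dnst \in C \colon \sigma \in \guard(\Dnst),
\]
because $a_i\sigma \in \bigcap C$ iff $a_i\sigma \in \Dnst$ for every $\Dnst \in C$. Given a forward step $\alpha_2(C) \firing{\act,\sigma^{\star}} \Inst'$ with $\act = (\guard,\setDel,\setAdd)$, the equivalence ensures that every $\Dnst \in C$ admits the analogous step $\Dnst \firing{\act,\sigma^{\star}} \Dnst'$ with $\Dnst' = (\Dnst \setminus \setDel\sigma^{\star}) \cup \setAdd\sigma^{\star}$, so $C' = \{\Dnst' : \Dnst \in C\}$ satisfies the $\forall$-step condition $C \transrel{\act,\sigma^{\star}} C'$. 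The relational witness $(\Inst', C') \in R$ then reduces to $\Inst' = \alpha_2(C')$, which follows from the set-theoretic identity
\[
\bigcap_{\Dnst \in C} \bigl((\Dnst \setminus X) \cup Y\bigr) = \bigl(\alpha_2(C) \setminus X\bigr) \cup Y,
\]
with $X = \setDel\sigma^{\star}$, $Y = \setAdd\sigma^{\star}$, obtained from $\bigcap_i (A_i \cup Y) = (\bigcap_i A_i) \cup Y$ together with $\bigcap_i (\Dnst_i \setminus X) = (\bigcap_i \Dnst_i) \setminus X$. The converse direction $R^{-1}$ is symmetric: a $\forall$-step $C \transrel{\act,\sigma^{\star}} C'$ forces $\sigma \in \guard(\Dnst)$ for every $\Dnst \in C$, hence $\sigma \in \guard(\alpha_2(C))$, and the abstract step produces $\alpha_2(C')$ by the same identity.

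The main obstacle is seeing precisely why this argument is confined to pf-CG rather than CG or NCG. Under a negated atom $\neg b$, absence of $b\sigma$ from every $\Dnst \in C$ is incomparable with its absence from $\bigcap C$, so the pivotal equivalence fails; dually, under an existential $\exists y$, witnesses chosen per $\Dnst$ need not coincide and therefore need not survive intersection. Both obstructions vanish only once the abstract domain is enlarged with labeled nulls (Theorems~\ref{theorem:intersection_abstraction_null_forall} and~\ref{theorem:combination_abstraction_null_forall}), so the restriction in the statement is sharp and precisely motivates the next refinement step of the framework.
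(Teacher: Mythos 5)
Your proof is correct and follows essentially the same route as the paper: the paper obtains Theorem~\ref{theorem:intersection_abstraction_forall} as the projection-free specialization of Theorem~\ref{theorem:intersection_abstraction_null_forall} (replacing homomorphisms by $\subseteq$, i.e., identity inclusions), and your direct argument --- the relation $R=\{(\alpha_2(C),C)\}$, the atom-wise equivalence $\sigma\in\guard(\bigcap C)\iff\forall\Dnst\in C:\sigma\in\guard(\Dnst)$, and the set identity for the successor state --- is exactly that specialization written out explicitly, including the same explanation (matching the paper's appendix remark) of why existential quantification and negation must be excluded. The only imprecision is the word ``incomparable'' for the negated-atom case: absence of $b\sigma$ from every $\Dnst\in C$ \emph{does} imply its absence from $\bigcap C$; it is only the converse implication that fails, which is still enough to break the bisimulation argument.
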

This is a special case of Theorem~\ref{theorem:intersection_abstraction_null_forall} (cf.\xspace next subsection). 
Furthermore, combining both former abstractions allows us to cover projection-free normal conjunctive guards in DMS actions.
The rationale behind this abstraction is that for an NCG $g$, $g^+$ is evaluated on the intersection component while $g^-$ is simultaneaously evaluated on the union component of the abstraction.
\begin{theorem}\label{theorem:combination_abstraction_forall}
	For $\alpha_3(C) := (\alpha_1(C),\alpha_2(C))$ and $\gamma_3((\Inst^\cup,\Inst^\cap)) := \{ \Dnst\in\univ{} \mid \Inst^\cap \subseteq \Dnst \subseteq \Inst^\cup  \}$, Galois connection $(\alpha_{3},\gamma_{3})$ is an abstract interpretation for pf-NCGs.
\end{theorem}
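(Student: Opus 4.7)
I would separately establish (i) that $(\alpha_3,\gamma_3)$ is a Galois connection and (ii) that $\alpha_3(C)$ is $\forall$-bisimilar to $C$ for every $C\in\mathbf{2}^{\univ{}}$ once guards are restricted to pf-NCGs. The relevant product order on $\univi{}\times\univi{}$ making $\alpha_3$ monotone is $(\Inst_1^\cup,\Inst_1^\cap)\preceq(\Inst_2^\cup,\Inst_2^\cap)$ iff $\Inst_1^\cup\subseteq\Inst_2^\cup$ and $\Inst_1^\cap\supseteq\Inst_2^\cap$, i.e.\xspace the product of the orders used for $\alpha_1$ and $\alpha_2$. The Galois conditions then reduce to elementary componentwise reasoning, reusing Theorems~\ref{theorem:union_abstraction_forall} and~\ref{theorem:intersection_abstraction_forall}: $C\subseteq\gamma_3(\alpha_3(C))$ because every $\Dnst\in C$ satisfies $\bigcap C\subseteq\Dnst\subseteq\bigcup C$, and $\alpha_3(\gamma_3(\Inst^\cup,\Inst^\cap))\preceq(\Inst^\cup,\Inst^\cap)$ because every $\Dnst\in\gamma_3(\Inst^\cup,\Inst^\cap)$ already lies between $\Inst^\cap$ and $\Inst^\cup$, so intersecting/unioning over all such $\Dnst$ can only shrink the bracketing interval.

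The behavioural heart is a matching correspondence lemma for pf-NCGs $g = a_1\wedge\ldots\wedge a_m\wedge\neg b_1\wedge\ldots\wedge\neg b_n$ and substitutions $\sigma$ with $\domain(\sigma)=\freevars(g)$:
\[
  g^+\sigma\subseteq\bigcap C\ \text{and}\ g^-\sigma\cap\bigcup C=\emptyset
  \iff
  \forall\Dnst\in C:\sigma\in g(\Dnst).
\]
The ``$\Leftarrow$'' direction uses that an atom present in every $\Dnst$ belongs to their intersection and that an atom absent from every $\Dnst$ misses their union; ``$\Rightarrow$'' is the straightforward dual since $\bigcap$ and $\bigcup$ already quantify over every $\Dnst$. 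Given this, I would define the abstract step $(\Inst^\cup,\Inst^\cap)\transrel{\act,\sigma^\star}(\Inst'^\cup,\Inst'^\cap)$ by $g^+\sigma\subseteq\Inst^\cap$, $g^-\sigma\cap\Inst^\cup=\emptyset$, $\Inst'^\cup=(\Inst^\cup\setminus\setDel\sigma^\star)\cup\setAdd\sigma^\star$, and $\Inst'^\cap=(\Inst^\cap\setminus\setDel\sigma^\star)\cup\setAdd\sigma^\star$. The set identities $\bigcup_{\Dnst\in C}(\Dnst\setminus X)=(\bigcup C)\setminus X$ and $\bigcap_{\Dnst\in C}((\Dnst\setminus X)\cup Y)=((\bigcap C)\setminus X)\cup Y$ (for fixed $Y$ and non-empty $C$) then yield $\alpha_3(C')=(\Inst'^\cup,\Inst'^\cap)$ for the pointwise concrete successor $C'=\{\Dnst'\mid \Dnst\in C,\ \Dnst\firing{\act,\sigma^\star}\Dnst'\}$. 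Consequently, $R=\{(\alpha_3(C),C)\mid C\subseteq\univ{}\}$ is a $\forall$-bisimulation: abstract and $\forall$-enabledness coincide by the correspondence lemma, and successors remain in $R$.

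The step I expect to be the main obstacle is pinpointing \emph{where} projection-freeness is actually needed. An existential prefix $\exists\vec y.\varphi$ would demand a single witness $\vec y$ common to all $\Dnst\in C$ when evaluated on $\bigcap C$ (dually on $\bigcup C$ for negated atoms), which is strictly stronger than each $\Dnst$ having a private witness; the correspondence lemma therefore breaks for general NCGs, foreshadowing the labeled-null enrichment of Theorem~\ref{theorem:combination_abstraction_null_forall}. Minor edge cases ($C=\emptyset$ and $\setAdd\sigma^\star\cap\setDel\sigma^\star\neq\emptyset$) would be handled separately in the set-identity step.
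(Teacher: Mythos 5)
Your proposal is correct and follows essentially the same route as the paper: the paper proves this theorem by specializing the labeled-null combination result (Theorem~\ref{theorem:combination_abstraction_null_forall}) to the projection-free case, which amounts to exactly your correspondence lemma --- evaluate $g^+$ on the intersection component and $g^-$ on the union component --- together with the bisimulation relation $R=\{(\alpha_3(C),C)\mid C\subseteq\univ{}\}$ and the commutation of the delete/add update with $\bigcup$ and $\bigcap$. Your version merely spells out directly (including the set identities and the role of projection-freeness) what the paper delegates by reference to its other proofs.
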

Next, we consider abstractions allowing for projections (i.e., existentially quantified variables in DMS action guards) to fully capture NCGs in DMS actions.

\subsection{Abstractions with Labeled Nulls: The Case of Intersection}\label{sec:abst_null}
There are two issues with the abstractions discussed so far:
(a) limited expressiveness in guards of DMS actions (no existential quantification) and
(b) (still) infinite branching of abstract states.
The reason for the latter is that abstract instances resemble their concrete counterparts too explicitly.
To resolve both issues we use the well-known \emph{labeled null} abstraction to get a notion of existence of values contained in a database whose exact values are irrelevant.
Finite branching is a welcome side-effect of this abstraction as well as a precise abstraction for DMSs using CGs (including projection via existential quantification).

\emph{Labeled nulls} are introduced in our framework as a countably infinite set $\nullS$ (disjoint from all other term sets).
As labeled nulls are proxies for the existence of values (i.e., constants), a database, in which every occurrence of a null is replaced by a constant (or other null), is certainly related to the instance that uses the null.
Let us denote the set of all instances using constants and labeled nulls by $\univi{\nullS}$ (short for $\univi{\constS \cup \nullS}_{\predS}$).
The notions of homomorphisms and guard matches naturally extend to databases containing nulls (i.e., constants must still map to constants, but nulls may map to nulls or constants).
Due to the nature of labeled nulls, their identity does not have the same role as constants have.
It is natural to consider $\univi{\nullS}$ closed under equivalence up to homomorphisms.
This means, instances $\Inst[I],\Inst[J]\in\univi{\nullS}$ are equal, denoted $\Inst[I]\homeq\Inst[J]$, if $\Inst[I]\to\Inst[J]$ and $\Inst[J]\to\Inst[I]$.
Note, on $\univ{}$ equivalence up to homomorphisms coincides with set equality.
For instance $\{\A\} \leftrightarrows \{\A,\pN{0}\}$ because we can map $A$ on $A$ and $\nullE_0$ on $A$.
$\{\fNN{0}{1}\} \to \{\fNN{0}{0}\}$ but $\{\fNN{0}{0}\} \not\to \{\fNN{0}{1}\}$ because we can not map $\nullE_0$ on $\nullE_0$ and $\nullE_0$ on $\nullE_1$.

$(\univi{\nullS},\to)$ forms a lattice and, by duality, $(\univi{\nullS},\leftarrow)$, too.
The join $\sqcup$ of $(\univi{\nullS},\to)$ is simply the union of the instances.
Conversely, $\sqcap$ is an intersection of two instances generalizing common atoms with different constants via null assertions.
For instance, $\Inst[I] = \{\A,\B, \fNN{0}{1}\}$ and $\Inst[J] = \{\A,\C,\fAC\}$ have
$\Inst[I] \sqcup \Inst[J] = \{\A,\B,\C,\fNN{0}{1},$ $\fAC\}$ as least upper bound
and the  greatest lower bound is 
$\Inst[I] \sqcap \Inst[J] = \{\A,\fNN{0}{1} \}$.

The next two definitions describe how an action is performed in $(\univi{\nullS},\to)$.
Let $\Inst$ be an instance and $\act = (g,\setDel,\setAdd)$ a DMS action.
Instead of extending guard matches $\sigma$ to $\sigma^{\star}$ (involving some constants that are added to the instance through variables in $\setAdd$), we consider extensions of $\sigma$ that insert (globally) fresh labeled nulls for all variables in $\vars(\setAdd)\setminus\freevars(\guard)$.
\begin{definition}
	\label{def:shortened-step}
	Let $\act = (\guard,\setDel,\setAdd)$ be a DMS action.
	For abstract instance $\Inst\in\univi{\nullS}$, if $\sigma\in \guard(\Inst)$, then $\Inst\transrel{\actsigma}(\Inst\setminus\setDel\sigma^{\star})\cup \setAdd\sigma^{\star}$ where $\sigma\subseteq\sigma^{\star}$ and for each variable $x\in\vars(\setAdd)\setminus\freevars(\guard)$, $\sigma^{\star}(x)$ is a fresh labeled null. %
\end{definition}
\begin{example}
\label{ex:shortened-step}
\REVISED[]{For action $\act_{\textit{add}} = (\textit{true}, \emptyset, \{\pX\})$ from example~\ref{ex:dms_action},
$\vars(\setAdd)\setminus\freevars(\guard) = \{ \mathit{x} \} \setminus \emptyset = \{ \mathit{x} \}$
and $\sigma^{\star}(x) = \nullE$.
We obtain $\emptyset\transrel{\tuple{\act_{\textit{add}} ,\emptyset}} \{\pN{}\}$.}
\end{example}
Note that the action label only contains the match $\sigma$ and not its extension.
The reason is that for instances $\Inst,\Inst[B]_1,\Inst[B]_2$ and action-match pair $\actsigma$, if $\Inst \transrel{\actsigma} \Inst[B]_1$ and $\Inst \transrel{\actsigma} \Inst[B]_2$, then $\Inst[B]_1 \leftrightarrows \Inst[B]_2$.
Thus, the different target instances cannot be distinguished in our abstract domain.
This notion of steps is similar to what the Chase does in existential rule reasoning~\cite{fagin2005data}.
Due to the closure of the domain under homomorphisms, it also resembles the standard chase and the core chase to certain extents~\cite{deutsch2008chase}.
Sets of concrete instances still proceed as originally defined in Sect.~\ref{sec:foundation}.
To still guarantee a resemblance between the action labels in our abstract domain and the labels used for concrete instances (where no nulls are involved), we introduce a notion of \emph{compatibility} \REVISED[between]{of} action labels.
\begin{definition}
	\label{def:compatibility}
	\REVISED[Two action labels]{Action label} $\langle \act_1, \sigma_1 \rangle$ \REVISED[and]{is \emph{compatible to} action label} $\langle \act_2, \sigma_2 \rangle$, denoted by $\langle \act_1, \sigma_1 \rangle\trianglelefteq\langle \act_2, \sigma_2 \rangle$, if $\act_1 = \act_2$ and $\sigma_1 \subseteq \sigma_2$.
\end{definition}
Note that we could have reduced the action labeling to include only the guard matches for concrete instances already.
However, this simplification does not make the branching finite.
Even worse, the resulting LTS would become nondeterministic and looses image-finiteness at the same time. 

As before, the abstraction mechanisms we study are based on greatest lower bounds and least upper bounds of the abstract domain $(\univi{\nullS},\to)$.
Next, we study the intersection abstraction of $C \in \mathbf{2}^{\univ{}}$ with $\alpha_{4}(C)$ in (\ref{eq:alphagamma_null_intersection}).
Generalizing from $(\univi{}, \supseteq)$ we get $(\univi{\nullS},\leftarrow)$ as the less databases $C$ contains, the bigger the abstract instance becomes (cf.\xspace Def.~\ref{def:absint} item 1).
Conversely, $\gamma_{4}(\Inst[I])$ in (\ref{eq:alphagamma_null_intersection}) for abstract instance $\Inst\in\univi{\nullS}$.
\begin{align}
	\alpha_{4}(C) := \bigsqcap C
	&  & 
	\gamma_{4}(\Inst[I]) := \{ \Dnst\in\univ{} \mid \Inst[I]\to\Dnst \} \} \label{eq:alphagamma_null_intersection}
\end{align}

\begin{proposition}
 	$(\alpha_{4},\gamma_{4})$ is a Galois connection.%
\end{proposition}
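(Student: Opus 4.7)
The plan is to verify the three requirements of Definition~\ref{def:absint} by reducing each to the universal property of the meet $\bigsqcap$ in the lattice $(\univi{\nullS},\to)$ together with closure of homomorphisms under composition. Throughout, the abstract order $\preceq$ is $\leftarrow$, so $\Inst_1 \preceq \Inst_2$ unfolds to $\Inst_2 \to \Inst_1$.

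First I would dispatch well-definedness. The function $\gamma_4$ plainly maps into $\mathbf{2}^{\univ{}}$. For $\alpha_4$, its value $\bigsqcap C$ exists in $\univi{\nullS}$ because $(\univi{\nullS},\to)$ was established to form a lattice; the use of $\bigsqcap$ over arbitrary $C$ in Table~\ref{fig:abstract_domains_and_guard_languages} tacitly requires the lattice to admit arbitrary meets (as in the analogous set-based case of Theorem~\ref{theorem:intersection_abstraction_forall}), and since the domain is taken up to $\homeq$ the resulting representative is fixed up to homomorphism equivalence — which is harmless as all subsequent claims are stated with respect to $\to$/$\leftarrow$.

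Second, monotonicity. For $\alpha_4$: if $C_1 \subseteq C_2$, then $\bigsqcap C_2$, being a lower bound of $C_2$, is also a lower bound of $C_1$ in $(\univi{\nullS},\to)$, so $\bigsqcap C_2 \to \bigsqcap C_1$, i.e.\ $\alpha_4(C_1)\preceq\alpha_4(C_2)$. For $\gamma_4$: if $\Inst_1 \preceq \Inst_2$, i.e.\ $\Inst_2 \to \Inst_1$, then for any $\Dnst \in \gamma_4(\Inst_1)$ one has $\Inst_1 \to \Dnst$, and composing the two homomorphisms gives $\Inst_2 \to \Dnst$, so $\Dnst \in \gamma_4(\Inst_2)$; thus $\gamma_4(\Inst_1)\subseteq\gamma_4(\Inst_2)$.

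Finally, the two bound conditions. For condition~(2), take any $\Dnst \in C$; since $\alpha_4(C)=\bigsqcap C$ is a lower bound of $C$, we have $\alpha_4(C) \to \Dnst$, hence $\Dnst \in \gamma_4(\alpha_4(C))$, so $C \subseteq \gamma_4(\alpha_4(C))$. For condition~(3), observe that $\Inst$ itself is a lower bound of $\gamma_4(\Inst)$ in $(\univi{\nullS},\to)$ by the very definition of $\gamma_4$; the universal property of the greatest lower bound then gives $\Inst \to \bigsqcap \gamma_4(\Inst) = \alpha_4(\gamma_4(\Inst))$, which unfolds to $\alpha_4(\gamma_4(\Inst)) \preceq \Inst$. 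The only subtle step to double-check is that $\Inst$ being in $\univi{\nullS}$ (hence possibly containing nulls) while elements of $\gamma_4(\Inst)$ live in $\univ{}$ (null-free) does not obstruct the universal property — it does not, because the meet is taken in $\univi{\nullS}$, which contains $\univ{}$ and in which $\to$ and $\bigsqcap$ are uniformly defined. Aside from this compatibility check, all steps are direct applications of the lattice axioms and the transitivity of $\to$.
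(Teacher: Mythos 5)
Your proof is correct and takes essentially the same route as the paper: the paper proves one general theorem (that $\alpha(C)=\bigsqcup C$ together with $\gamma(\Inst[A])=\{\Dnst\in\univ{}\mid\Dnst\sqsubseteq\Inst[A]\}$ forms a Galois connection over any complete abstract lattice containing $\univ{}$) and obtains this proposition as the instance where the abstract order is $\leftarrow$, which is precisely your universal-property-of-the-meet argument read in the dual order $(\univi{\nullS},\to)$. Your explicit checks of monotonicity and of the compatibility between null-free databases and null-containing abstract instances are slightly more detailed than the paper's, but the substance is identical.
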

Using labeled nulls, abstract DMSs using CGs become precise abstractions of their concrete counterparts.
\begin{example}
	We analyze the guard $\guard_{af}$ (does there exist a friendship from $x$ to someone) from example~\ref{ex:guards} on 
	$C = \{
	\{\A,\B,\fAB\},\{\A,\C,$ $\fAC\}
	\}$ and
	$\Inst = \bigsqcap C = \{\A,\fXN{A}{1} \}$.
	In contrast to $\Inst' = \bigcap C = \{\A \}$, we have a friendship with nulls in $\Inst$.
	Now we get homomorphisms $h_{\Inst}:\guard_{af} \to \Inst$ and $h_{\Dnst}:\guard_{af} \to \Dnst$ for each $\Dnst \in C$.
\end{example}
\begin{theorem}\label{theorem:intersection_abstraction_null_forall}
	\REVISED[The Galois connection ]{}$(\alpha_{4},\gamma_{4})$ is an abstract interpretation for CGs\REVISED[ and $\forall$-bisimulation]{}.
\end{theorem}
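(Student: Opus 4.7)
The plan is to construct a single $\forall$-bisimulation $R \subseteq \univi{\nullS} \times \mathbf{2}^{\univ{}}$ witnessing the bisimilarity required by the definition of abstract interpretation w.r.t.\ CGs. I take
\[ R = \{(\Inst, C) \mid C \neq \emptyset \text{ and } \Inst \homeq \textstyle\bigsqcap C\}, \]
so that $(\alpha_{4}(C), C) \in R$ by the definition of $\alpha_{4}$. The technical heart of the proof is a \emph{CG matching correspondence}: combining Proposition~\ref{prop:hom-cg-lemma} with the universal property of the meet in the lattice $(\univi{\nullS}, \to)$, homomorphisms $h : g \to \bigsqcap C$ correspond to families $(h_{\Dnst} : g \to \Dnst)_{\Dnst \in C}$ via composition with the canonical projections $\pi_{\Dnst} : \bigsqcap C \to \Dnst$. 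Consequently, $\sigma \in g(\bigsqcap C)$ iff $\pi_{\Dnst} \circ \sigma \in g(\Dnst)$ for every $\Dnst \in C$. I would prove this by unfolding $\bigsqcap C$ via a pointwise product-of-atoms construction (as hinted at in the example preceding the theorem), using freshly paired labeled nulls where constants across the $\Dnst$'s diverge.

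For the forward direction (abstract-to-concrete $\forall$-simulation), suppose $(\Inst, C) \in R$ and $\Inst \transrel{\actsigma} \Inst'$ for a DMS action $\act = (g, \setDel, \setAdd)$ with CG $g$. The matching correspondence and $\Inst \homeq \bigsqcap C$ furnish, for each $\Dnst \in C$, a match $\sigma_{\Dnst} := \pi_{\Dnst} \circ \sigma$ of $g$ in $\Dnst$. Extending each $\sigma_{\Dnst}$ to $\sigma_{\Dnst}^{\star}$ by choosing arbitrary constants for variables in $\vars(\setAdd) \setminus \freevars(g)$ yields concrete steps $\Dnst \transrel{\tuple{\act, \sigma_{\Dnst}^{\star}}} \Dnst'$, compatible with $\actsigma$ via $\trianglelefteq$ (Def.~\ref{def:compatibility}). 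Let $C' := \{\Dnst' \mid \Dnst \in C\}$. Condition (c) of the $\forall$-simulation is immediate by construction. The main remaining task is to verify $(\Inst', C') \in R$, i.e., $\Inst' \homeq \bigsqcap C'$, by checking that the projections $\pi_{\Dnst}$ extend naturally to $\Inst' \to \Dnst'$, mapping the freshly introduced nulls for $\setAdd\sigma^{\star}$ to the correspondingly chosen constants in each $\setAdd\sigma_{\Dnst}^{\star}$, and commuting with the set difference specified by $\setDel\sigma^{\star}$.

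For the backward direction ($R^{-1}$ as a $\forall$-simulation), a $\forall$-step $C \transrel{\tuple{\act, \sigma^{\star}}} C'$ supplies a shared match $\sigma^{\star}$ of $g$ in every $\Dnst \in C$; the matching correspondence applied in reverse to the constant family $(\sigma^{\star})_{\Dnst \in C}$ lifts this to a match $\sigma$ in $\bigsqcap C \homeq \Inst$, inducing an abstract step $\Inst \transrel{\actsigma} \Inst'$ with $\Inst' \homeq \bigsqcap C'$. The main obstacle is the label mismatch in the forward direction: the abstract label $\actsigma$ may involve substitutions mapping free variables to nulls, and the induced concrete matches $\sigma_{\Dnst}^{\star}$ can differ across databases, so the paper's notion of $\forall$-bisimilarity must be read modulo label compatibility $\trianglelefteq$. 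The hardest technical step will be verifying $\Inst' \homeq \bigsqcap C'$ in this setting, which requires careful bookkeeping of how globally fresh nulls introduced on the abstract side correspond to arbitrary constant choices on the concrete side; this correspondence is made possible precisely by the closure of $\univi{\nullS}$ under $\homeq$.
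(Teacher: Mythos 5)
Your proposal follows essentially the same route as the paper's proof: the same relation $R$ (pairs $(\Inst,C)$ with $\Inst\homeq\bigsqcap C$), Proposition~\ref{prop:hom-cg-lemma} composed with the homomorphism $\bigsqcap C\to\Dnst$ for the abstract-to-concrete direction, and the universal property of the meet (the matched guard, with its existential variables read as labeled nulls, is a lower bound of $C$) for the converse. The one substantive divergence is that you treat the induced concrete matches $\sigma_{\Dnst}=\pi_{\Dnst}\circ\sigma$ as potentially different across the databases of $C$ and then propose to read $\forall$-bisimilarity ``modulo $\trianglelefteq$.'' This is both unnecessary and, taken literally, damaging: a substitution maps free variables to \emph{constants}, and every homomorphism (in particular each $\pi_{\Dnst}$) fixes constants, so $\pi_{\Dnst}\circ\sigma=\sigma$ on $\freevars(\guard)$ --- the guard match is literally identical on $\bigsqcap C$ and on every $\Dnst\in C$, which is exactly the observation the paper's proof rests on and which makes the labels of the lifted step agree without weakening the simulation conditions (recall that a $\forall$-step $C\transrel{a}C'$ requires one shared label $a$ for all databases, so per-database labels would not even form a legal step). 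Only the existentially quantified variables of $\guard$ may be sent to nulls by the witnessing homomorphism, and those never appear in $\sigma$. The residual discrepancy between $\sigma$ and its extension $\sigma^{\star}$ on $\vars(\setAdd)\setminus\freevars(\guard)$ --- fresh nulls abstractly versus constants concretely --- is the only place where Definition~\ref{def:compatibility} plays a role, and there your bookkeeping remark (that $\bigsqcap C'$ re-generalizes divergent constant choices back to nulls) is a reasonable, indeed more explicit, way to argue $\Inst'\homeq\bigsqcap C'$ than the paper's terse ``$\Inst'$ is a lower bound \dots\ similarities pull out.'' Finally, your ``matching correspondence'' is stated as an equivalence, but its right-to-left direction as phrased presupposes that $\sigma$ already lives over $\bigsqcap C$; for the backward simulation you only need the lower-bound argument: a single $\sigma$ matching $\guard$ in every $\Dnst\in C$ matches $\guard$ in $\bigsqcap C$.
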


Generalizing the Galois connection $(\alpha_1,\gamma_1)$ to $\univi{\nullS}$ yields $(\alpha_5,\gamma_5)$ with $\alpha_5=\alpha_1$ and $\gamma_5(\Inst[A]) := \{ \Dnst\in\univ{\nullS} \mid \Dnst\to\Inst[A] \}$.
As for all databases $\Dnst$ without labeled nulls, the existence of a homomorphism from $\Dnst$ to $\Inst[A]$ holds if, and only if, $\Dnst\subseteq\Inst[A]$, the new domain generalizes the original result (i.e., Theorem~\ref{theorem:union_abstraction_forall}) slightly, but without further impact.
After all, labeled nulls are proxies for the existence of constants, whereas CNA guards account for the absence of atoms.

\begin{theorem}\label{theorem:union_abstraction_null_forall}
	\REVISED[The ]{}Galois connection $(\alpha_{5},\gamma_{5})$ with $\alpha_{5}(C) := \bigsqcup C$ and $\gamma_{5}(\Inst[I]) := \{ \Dnst\in\univ{} \mid \Dnst\to\Inst[I] \}$ is an abstract interpretation for CNAs\REVISED[ and $\forall$-bisimilarity]{}.
\end{theorem}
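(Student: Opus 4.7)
The plan is threefold: verify the Galois connection, establish a key invariant characterizing CNA matches over the union abstraction, and lift this invariant to a $\forall$-bisimulation that is closed under DMS steps.

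For the Galois connection, monotonicity of $\alpha_{5}$ follows from $C_{1}\subseteq C_{2}$ implying $\bigcup C_{1}\subseteq \bigcup C_{2}$, which yields $\alpha_{5}(C_{1})\to\alpha_{5}(C_{2})$ via the identity. Monotonicity of $\gamma_{5}$ is by composition of homomorphisms. The roundtrip $\gamma_{5}\circ\alpha_{5}(C)\supseteq C$ is immediate since each $\Dnst\in C$ has identity $\Dnst\to\bigsqcup C$; and $\alpha_{5}\circ\gamma_{5}(\Inst)\to\Inst$ follows because each database $\Dnst$ with $\Dnst\to\Inst$ contributes to the union and the individual homomorphisms, all agreeing on $\constS$, combine into a single homomorphism into $\Inst$.

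The key invariant is: for any CNA guard $\guard = \forall \vec{y}.\ \neg a_{1}\wedge\ldots\wedge\neg a_{m}$ and substitution $\sigma$ over $\freevars(\guard)$, $\alpha_{5}(C),\sigma\models\guard$ iff $\Dnst,\sigma\models\guard$ for every $\Dnst\in C$. Since $\alpha_{5}(C)=\bigcup_{\Dnst\in C}\Dnst$, an instantiated atom $a_{i}\sigma'$ lies in $\alpha_{5}(C)$ iff it lies in some $\Dnst\in C$; as CNAs assert absence under every extension of $\sigma$, absence in the union coincides with absence across every $\Dnst$.

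To obtain $\forall$-bisimilarity, define $R$ as the smallest relation containing $(\alpha_{5}(C),C)$ for each $C\in\mathbf{2}^{\univ{}}$ and closed under compatible successors for CNA actions. Given $(\Inst,C)\in R$ and an abstract step $\Inst\transrel{\actsigma}\Inst'$ via $\act=(\guard,\setDel,\setAdd)$, the key invariant yields that $\sigma$ matches $\guard$ in every $\Dnst\in C$; for each such $\Dnst$, extending $\sigma$ to $\sigma^{\star}_{\Dnst}$ by pairwise disjoint fresh constants produces compatible concrete steps $\Dnst\transrel{\act,\sigma^{\star}_{\Dnst}}\Dnst'$ with $\actsigma\trianglelefteq\langle\act,\sigma^{\star}_{\Dnst}\rangle$. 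Collecting these into $C'$, the invariant carries over: constant-only atoms of $\Inst'$ belong to $(\alpha_{5}(C)\setminus\setDel\sigma)\cup\setAdd_{0}\sigma$ (writing $\setAdd_{0}$ for the atoms of $\setAdd$ whose variables all lie in $\freevars(\guard)$) and are thus present in every $\Dnst'$, while any null-containing atom of $\Inst'$ corresponds, via a null-to-fresh-constant renaming, to an atom in each $\Dnst'$. The reverse direction (concrete to abstract) is handled symmetrically.

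The main obstacle is aligning the semantics of CNA guards on null-carrying abstract instances with their semantics on concrete databases that carry the per-$\Dnst$ fresh constants. The argument relies on extending quantifier ranges over $\constS\cup\nullS$ on the abstract side (as indicated by the paper's ``natural extension of guard matches to databases containing nulls''), so that any abstract CNA-failure witness using a null $n$ corresponds to a concrete CNA-failure witness using the fresh constant introduced in place of $n$, and vice versa. Establishing this witness correspondence is the crux of the induction; once it is in place, $R$ is closed under $\forall$-simulation in both directions and hence $\alpha_{5}(C)$ and $C$ are $\forall$-bisimilar.
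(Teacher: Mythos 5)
Your overall strategy coincides with the paper's: the witness relation is (essentially) $R=\{(\alpha_{5}(C),C)\mid C\in\mathbf{2}^{\univ{}}\}$, the key step is that a CNA match on $\bigcup C$ coincides with a CNA match on every $\Dnst\in C$ (absence of an instantiated atom in the union equals absence in every member), and the paper itself proves this theorem only by a remark deferring to the null-free union case (Theorem~\ref{theorem:union_abstraction_forall}), whose proof is exactly this invariant plus the computation $\bigcup\{(\Dnst\setminus\setDel\sigma^{\star})\cup\setAdd\sigma^{\star}\mid\Dnst\in C\}=(\bigcup C\setminus\setDel\sigma^{\star})\cup\setAdd\sigma^{\star}$. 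Your Galois-connection and invariant parts are fine.

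The genuine gap is in your successor construction. You extend $\sigma$ to $\sigma^{\star}_{\Dnst}$ by \emph{pairwise disjoint} fresh constants, one choice per database. First, the resulting concrete steps carry different labels $\langle\act,\sigma^{\star}_{\Dnst}\rangle$, so they do not assemble into a single $\forall$-step $C\transrel{a}C'$ under one label $a$ as Definition~\ref{def:forall-sim} demands (compatibility $\trianglelefteq$ is not part of that definition). Second, and more damagingly, the ``witness correspondence'' you yourself identify as the crux fails for the pair $(\Inst',C')$ you construct: if $\Dnst'_{1}\in C'$ contains $P(c_{1})$ for its private fresh constant $c_{1}$, every other $\Dnst'_{2}\in C'$ contains $P(c_{2})$ with $c_{2}\neq c_{1}$, and $\Inst'$ contains only $P(n)$ for a null $n$, then the CNA guard $\neg P(x)$ under $\sigma=\{x\mapsto c_{1}\}$ is violated in $\Dnst'_{1}$ but satisfied in $\Inst'$ (matches fix constants, so $P(c_{1})\notin\Inst'$) and in all remaining members of $C'$; the corresponding abstract step from $\Inst'$ therefore has no matching $\forall$-step from $C'$, so your $R$ is not closed and the induction does not go through. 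The paper avoids both problems by firing \emph{all} members of $C$ with one and the same constant extension $\sigma^{\star}$, so that $\alpha_{5}(C')=(\alpha_{5}(C)\setminus\setDel\sigma^{\star})\cup\setAdd\sigma^{\star}$ holds literally, the successor pair is again of the form $(\alpha_{5}(C'),C')$, and no labeled null ever enters the abstract successor --- which is precisely why the remark can reuse the proof of Theorem~\ref{theorem:union_abstraction_forall} unchanged. Replacing your per-database extensions by a single shared $\sigma^{\star}$ (and dropping the null-introducing abstract step in favor of the concrete one on $\bigcup C$) repairs the argument.
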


\subsection{Combining Unions and Intersections}

Although the former abstractions already capture existentially quantified variables (i.e., projections), they do not jointly support projections as well as negation.
A corresponding abstraction capturing both is $\alpha_{6}: \mathbf{2}^{\univ{}} \to \univi{\nullS} \times \univi{\nullS} $ with respective concretization $\gamma_{6}: \univi{\nullS} \times \univi{\nullS} \to \mathbf{2}^{\univ{}}$ as defined in \eqref{eq:alphagamma_null_combination}.
The abstract domain is $(\univi{\nullS} \times \univi{\nullS}, \preceq)$. 
$\Inst_1 \preceq  \Inst_2$ is defined as 
$(\Inst^\sqcap_1,\Inst^\sqcup_1) \preceq  (\Inst^\sqcap_2,\Inst^\sqcup_2)$ if and only if $\Inst^ \sqcap_1 \leftarrow \Inst^\sqcap_2$ and $\Inst^\sqcup_1 \to \Inst^\sqcup_2$. The lattice $(\univi{\nullS} \times \univi{\nullS}, \preceq)$ is a combination of the two lattices $(\univi{\nullS}, \leftarrow)$ and $(\univi{\nullS}, \to)$.
\begin{align}
	\alpha_{6}(C) := (\bigsqcap C,\bigsqcup C)
	&  & 
	\gamma_{6}(\Inst) := \{ \Inst^\sqcap \to \Dnst \to \Inst^\sqcup \}\label{eq:alphagamma_null_combination}
\end{align}
\begin{proposition}\label{prop:galois_null_combination}
	$(\alpha_{6},\gamma_{6})$ is a Galois connection.
\end{proposition}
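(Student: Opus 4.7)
The plan is to verify the three requirements of Definition~\ref{def:absint} for $(\alpha_6,\gamma_6)$ by exploiting the Cartesian product structure of the abstract domain $(\univi{\nullS}\times\univi{\nullS}, \preceq)$. Since $\alpha_6(C) = (\alpha_4(C),\alpha_5(C))$ and $\preceq$ decomposes as $\leftarrow$ on the first component combined with $\to$ on the second, each of the three conditions reduces to its analogue for the two single-component Galois connections underlying Theorems~\ref{theorem:intersection_abstraction_null_forall} and~\ref{theorem:union_abstraction_null_forall}. In this sense, the argument mirrors the reduction from Theorem~\ref{theorem:combination_abstraction_forall} to Theorems~\ref{theorem:union_abstraction_forall} and~\ref{theorem:intersection_abstraction_forall} in the set-based setting, only now transported to the homomorphism lattice.

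For totality and monotonicity, I would first note that $\bigsqcap C$ and $\bigsqcup C$ exist for every $C \subseteq \univ{}$, so $\alpha_6$ is total, and $\gamma_6(\Inst) \subseteq \univ{}$ is always well-defined (empty exactly when $\Inst^\sqcap \not\to \Inst^\sqcup$). Monotonicity of $\alpha_6$ splits componentwise: if $C_1 \subseteq C_2$, then $\bigsqcap C_2$ is a lower bound of $C_1$ in $(\univi{\nullS},\to)$ and each element of $C_1$ is bounded above by $\bigsqcup C_2$, so $\bigsqcap C_1 \leftarrow \bigsqcap C_2$ and $\bigsqcup C_1 \to \bigsqcup C_2$, that is, $\alpha_6(C_1) \preceq \alpha_6(C_2)$. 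Monotonicity of $\gamma_6$ is a direct homomorphism-chaining exercise: if $(\Inst^\sqcap_1,\Inst^\sqcup_1) \preceq (\Inst^\sqcap_2,\Inst^\sqcup_2)$ and $\Dnst \in \gamma_6(\Inst_1)$, then composing along $\Inst^\sqcap_2 \to \Inst^\sqcap_1 \to \Dnst \to \Inst^\sqcup_1 \to \Inst^\sqcup_2$ places $\Dnst$ in $\gamma_6(\Inst_2)$.

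For condition~(2), $\gamma_6 \circ \alpha_6(C) \supseteq C$ holds because every $\Dnst \in C$ satisfies $\bigsqcap C \to \Dnst$ ($\bigsqcap C$ being a lower bound of $C$) and $\Dnst \to \bigsqcup C$ ($\bigsqcup C$ being an upper bound), so $\Dnst \in \gamma_6(\alpha_6(C))$. For condition~(3) with $\Inst = (\Inst^\sqcap, \Inst^\sqcup)$, observe that $\Inst^\sqcap$ is itself a lower bound of $\gamma_6(\Inst)$ in $(\univi{\nullS},\to)$ while $\Inst^\sqcup$ is an upper bound; since $\bigsqcap\gamma_6(\Inst)$ and $\bigsqcup\gamma_6(\Inst)$ are the greatest lower bound and least upper bound respectively, we obtain $\Inst^\sqcap \to \bigsqcap\gamma_6(\Inst)$ and $\bigsqcup\gamma_6(\Inst) \to \Inst^\sqcup$, which is exactly $\alpha_6(\gamma_6(\Inst)) \preceq \Inst$. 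The main subtlety throughout is keeping the two opposite lattice orientations on the product components straight, so that a ``lower bound'' on one side does not get confused with an ``upper bound'' on the other; a minor side point is that $\gamma_6(\Inst)$ ranges over $\univ{}$ rather than all of $\univi{\nullS}$, but this does not affect the bound arguments since nulls in $\Inst^\sqcap$ may always be instantiated by constants to produce concrete witnesses whenever $\Inst^\sqcap \to \Inst^\sqcup$.
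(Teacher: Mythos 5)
Your proof is correct, and at its core it runs on the same order-theoretic engine as the paper's: condition~(2) because $\alpha_6(C)$ is an upper bound of $C$ (embedded as $\Dnst\mapsto(\Dnst,\Dnst)$) so every $\Dnst\in C$ lies in the principal down-set $\gamma_6(\alpha_6(C))$, and condition~(3) because $\Inst$ is an upper bound of $\gamma_6(\Inst)$ while $\alpha_6(\gamma_6(\Inst))$ is the \emph{least} such bound. The difference is organizational: the paper proves one general theorem --- for any complete lattice $(\absDom,\sqsubseteq)$ with $\univ{}\subseteq\absDom$, the pair $\alpha(C)=\bigsqcup C$, $\gamma(\Inst[A])=\{\Dnst\in\univ{}\mid\Dnst\sqsubseteq\Inst[A]\}$ is a Galois connection --- and derives all six propositions as corollaries, using that $(\univi{\nullS}\times\univi{\nullS},\preceq)$ is a complete product lattice; you instead verify the three conditions directly on the product domain and decompose each componentwise into the $(\univi{\nullS},\leftarrow)$ and $(\univi{\nullS},\to)$ cases. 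Your route is more concrete and has the virtue of actually spelling out monotonicity of $\alpha_6$ and $\gamma_6$, which the paper's general proof labels but does not argue; the paper's route buys uniformity across all six connections at once. One inessential aside in your last sentence is wrong: $\Inst^\sqcap\to\Inst^\sqcup$ does not guarantee a concrete (finite) witness in $\gamma_6(\Inst)$ --- take $\Inst^\sqcap=\Inst^\sqcup$ an infinite set of ground atoms over distinct constants, so no finite database sits between them --- but nothing in your bound arguments relies on $\gamma_6(\Inst)$ being nonempty (if it is empty, $\bigsqcap\emptyset=\top$ and $\bigsqcup\emptyset=\bot$ still satisfy condition~(3)), so the proof stands.
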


A substitution $\sigma$ holds for a  NCG $\guard$ and an abstract state $\Inst = (\Inst^\sqcap, \Inst^\sqcup)$ if the following holds: $\sigma \in \guard (\Inst)$ if $\sigma \in \guard^+(\Inst^\sqcap)$ and $\sigma \in \guard^-(\Inst^\sqcup)$.
\begin{theorem}\label{theorem:combination_abstraction_null_forall}
	\REVISED[The Galois connection ]{}$(\alpha_{6},\gamma_{6})$ is an abstract interpretation for NCG\REVISED[ and $\forall$-bisimulation]{}.
\end{theorem}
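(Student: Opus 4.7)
The approach is to combine the strategies of Theorems~\ref{theorem:intersection_abstraction_null_forall} and \ref{theorem:union_abstraction_null_forall} via the NCG decomposition $g = g^{+} \wedge g^{-}$. The crucial property, already stated right before the theorem, is that $\sigma \in g(\Inst)$ iff $\sigma \in g^{+}(\Inst^{\sqcap})$ and $\sigma \in g^{-}(\Inst^{\sqcup})$. Hence the positive (conjunctive) part is read off the intersection component and the negative part off the union component, so that the two one-sided proofs can be run in parallel on the two coordinates of $\alpha_{6}$. Concretely, I would define the candidate relation $R := \{(\alpha_{6}(C),C) \mid C \in \mathbf{2}^{\univ{}}\}$ and prove that both $R$ and $R^{-1}$ are $\forall$-simulations (with respect to the compatibility of labels from Def.~\ref{def:compatibility}).

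For the forward direction, let $\alpha_{6}(C) = (\Inst^{\sqcap},\Inst^{\sqcup})$ step via $\actsigma$ with $\act = (g,\setDel,\setAdd)$. By the decomposition $\sigma \in g^{+}(\Inst^{\sqcap})$ and $\sigma \in g^{-}(\Inst^{\sqcup})$. For every $\Dnst \in C$ the meet property of $\Inst^{\sqcap} = \bigsqcap C$ in $(\univi{\nullS},\to)$ supplies a homomorphism $\Inst^{\sqcap} \to \Dnst$; composing with the witness from Prop.~\ref{prop:hom-cg-lemma} yields a match of $g^{+}$ in $\Dnst$ extending $\sigma$. Since $\bigsqcup C$ is plain set-union, every atom of $\Dnst$ already lies in $\Inst^{\sqcup}$, so absence of each $b_{i}\sigma$ in $\Inst^{\sqcup}$ forces its absence in $\Dnst$ (Prop.~\ref{prop:hom-nca-lemma}); hence every $\Dnst \in C$ admits a compatible step, witnessing the universal clause of Def.~\ref{def:forall-sim}. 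The resulting $C'$ must then satisfy $\alpha_{6}(C') \homeq \Inst'$, which I would verify coordinate by coordinate by reusing the intersection/union bookkeeping of the two single-sided proofs, identifying the fresh labeled nulls introduced by $\sigma^{\star}$ in the abstract step with the fresh constants introduced by the extensions $\sigma_{\Dnst}^{\star}$ of each concrete step.

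For the backward direction, assume $C \transrel{\actsigma} C'$ is a $\forall$-step, so each $\Dnst \in C$ fires $\act$ with some $\sigma_{\Dnst} \supseteq \sigma$. The family of homomorphisms $g^{+} \to \Dnst$ factors through $\Inst^{\sqcap} = \bigsqcap C$ by the universal property of the meet in $(\univi{\nullS},\to)$, yielding $\sigma \in g^{+}(\Inst^{\sqcap})$. For the negative part, suppose $b_{i}\sigma \in \Inst^{\sqcup} = \bigsqcup C$; then $b_{i}\sigma \in \Dnst$ for some $\Dnst \in C$, contradicting $\sigma_{\Dnst} \in g^{-}(\Dnst)$. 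By the decomposition $\sigma \in g(\Inst)$ and the abstract step exists, and invariance of $R$ after the step is again checked component-wise.

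The main obstacle will be the post-step bookkeeping: showing that $\alpha_{6}$ commutes, up to $\homeq$, with the combined deletion/addition effect. Deletion interacts subtly with $\bigsqcap$ because atoms shared by all $\Dnst \in C$ but differing in some positions are merged in $\bigsqcap C$ via fresh nulls, so verifying that removing $\setDel\sigma^{\star}$ from $\Inst^{\sqcap}$ matches the intersection of the post-step databases hinges on the closure of $\univi{\nullS}$ under homomorphic equivalence. This is essentially the bookkeeping already required in Theorem~\ref{theorem:intersection_abstraction_null_forall}, but it must now be carried out in parallel with the union-side bookkeeping from Theorem~\ref{theorem:union_abstraction_null_forall}, which is precisely what the product structure of $\alpha_{6}$ and $\gamma_{6}$ is designed to accommodate.
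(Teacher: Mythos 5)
Your proposal matches the paper's proof essentially exactly: the paper uses the same relation $R = \{(\alpha_6(C),C) \mid C\subseteq\univ{}\}$, the same decomposition of an NCG into $g^+$ evaluated on the $\bigsqcap$-component and $g^-$ on the $\bigsqcup$-component, and then delegates both the enabledness argument and the post-step bookkeeping component-wise to the proofs of Theorems~\ref{theorem:intersection_abstraction_null_forall} and~\ref{theorem:union_abstraction_null_forall}. Your version is, if anything, slightly more explicit about the homomorphism composition and the meet's universal property than the paper's own (rather terse) argument.
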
	

\begin{example}
With Galois connection $(\alpha_{6},\gamma_{6})$ the guard $\guard_{end} := \exists x,y,z. \fXY$ $\wedge \fYZ \wedge \fZX \wedge \neg P(x,y) \wedge \neg P(y,z) \wedge \neg P(z,x)$ from action $\act_{end}$ (example~\ref{example:semi-reachability}) holds in the abstract and concrete domain.
\end{example}

\section{Related Work}\label{sec:rel_work}
\noindent\textbf{Reasoning about Database-Manipulating Processes.}
Most recent works consider formal process languages for
manipulating relational database in the context of business process modeling~\cite{calvanese2013foundations}.

\emph{Data manipulating systems} (DMS) as considered
in this paper are based on Abdullah et al.~\cite{abdulla2018complexity}.
The authors use the formalism to
study boundaries of decidability of (generally undecidable) reachability of state predicates
in DMS processes.
Their approach employs a formal semantics of DMS processes based on Petri nets and counter machines
in combination with multiset-based abstraction of databases.
Thereupon, Abdullah et al.\xspace impose bounds
on database schemas as well as query evaluation to obtain decidable fragments.
Calvanese et al.~\cite{calvanese2015implementing} also consider a DMS-like language for which they
define an \emph{LTS-based operational semantics} to support CTL model-checking of such systems.
Similar to Abdullah et al., bounds are imposed on the generally infinite state space 
to enable an effective, yet incomplete model-checking procedure.

Cangialosi et al.~\cite{cangialosi2010conjunctive,de2012verification} 
consider a DMS-like formalism called artifact-centric (service) language
to verify process properties expressed in the $\mu$-calculus.
To obtain an effective verification procedure, the authors employ, in accordance to our framework, 
homomorphism equivalence as abstraction and restrict the process language to conjunctive queries, respectively.
Bagheri et al.~\cite{bagheri2011foundations} extend the work of Cangialosi et al.
by supporting negation within first-order queries serving
as preconditions (guards) of transitions.
As a consequence, processes must be restricted to be weakly acyclic 
in order to ensure a finite solution.

Other works use \emph{Petri nets} with data (colored Petri nets) as a DMS-like formalism.
\REVISED{Montali et al.~\cite{montali2017db}} propose DB-nets to integrate data- and process-related aspects of business processes.
\REVISED[In~\cite{montali2019db}, Montali et al.\xspace extended DB-nets by priorities
to guide the selection of steps to enable the usage of practical
tools like CPN tools to validate DB-nets.]{}
In~\cite{montali2016model}, Montali et al.\xspace adopt soundness checks (including reachability) known from workflow nets
to DB-nets, where a finite solution is ensured by employing different notions of boundedness.
This work work \REVISED[has been recently]{has recently been} extended by Ghilardi et al.~\cite{ghilardi2020petri,ghilardi2022petri}
to support conjunctive queries with atomic negation and \REVISED[exists]{existential} quantifiers.

To summarize, most works impose 
bounds on the state space and/or restrictions of guard/query languages
to ensure effective reasoning about semantic properties of DMS-like processes.
However, to the best of our knowledge, 
none of these works provide a comprehensive
decomposition hierarchy of guard/query expressions 
together with a precise characterization
of corresponding semantic-preserving abstractions.

\smallskip
\noindent\textbf{Abstraction Techniques for Databases.}
Halder et al.~\cite{halder2010abstract,halder2012abstract}
apply principles of \emph{abstract interpretation} in a more practical setting to 
define fine-grained abstractions for SQL query expressions.
For approximating query result sets, 
query- and database-specific lattice-based abstractions are applied to 
value ranges of attribute constraints in selection conditions.
In other works, abstract interpretation is mostly
used to formalize the interface between database languages
and programming languages.
Baily et al.~\cite{bailey1999abstract} apply abstract interpretation
for termination analysis for a functional programming language performing
database manipulations.
Similar attempts are proposed by Amato et al.~\cite{amato1993data} and 
Toman et al.~\cite{toman2005constraint} to reason about the interplay 
between imperative programming and
database manipulating operations.
However, using abstract interpretation to characterize
an implementation-independent hierarchy of database  
abstractions as proposed in this paper
has not yet been considered.

Besides abstract interpretation, \emph{symbolic execution}
techniques are also frequently considered to
effectively cope with large/infinite state spaces of database
systems.
In these approaches, sets of databases instances are symbolically represented
using logical constraints, where most recent works employ this approach for test-data generation
from/for databases~\cite{pan2011database,marcozzi2013relational,li2010dynamic,lo2010generating}. 
In contrast, elaborating a hierarchy of \emph{symbolic} abstractions using
different fragments of propositional logics similar to our approach, has not
been investigated so far.

\section{Conclusion}\label{sec:conclusion}

We proposed a hierarchy
of abstract domains for representing (possibly infinite) sets of databases instances
in a final way based on the principles of abstract interpretation.
The resulting hierarchy is semantic-preserving up-to bisimilarity and is shaped by
different fragments of first-order logics serving as 
guard language of database-manipulating processes.
As a future work, our framework can be instantiated in different
ways to facilitate DMS model-checking (e.g., considering corresponding fragments
of the modal $\mu$-calculus as specification language).
\REVISED[We therefore complete our framework to comprise behavioral-preserving abstractions w.r.t. $\exists$-bisimulation.]{}
To this end, a purely abstract step semantics is to be defined 
which allows us to explore the abstract LTS (e.g., starting from 
all possible initial database instances).
We further plan to enrich DMS by a formal process
language like Petri nets and CCS to investigate effects as induced by
constructs like guarded choice and concurrent actions. %

\paragraph{Acknowledgements.}
Stephan Mennicke is partly supported by 
DFG (German Research Foundation) in project 389792660 (TRR 248, \href{https://www.perspicuous-computing.science/}{CPEC}),
by the BMBF (Federal
Ministry of Education and Research) under project 13GW0552B (\href{https://digitalhealth.tu-dresden.de/projects/kimeds/}{KIMEDS}),
in the \href{https://www.scads.de/}{Center for Scalable Data Analytics and
Artificial Intelligence} (ScaDS.AI),
and by BMBF and DAAD (German Academic Exchange Service) in project 57616814 (\href{https://www.secai.org/}{SECAI, School of Embedded and Composite AI}).

 \bibliographystyle{splncs04}
 %

 %

%
\newpage
\appendix
%
\section{Domains are Lattices}
We start by noticing that we have seven different domains, all of which are bounded lattices:
\begin{enumerate}[(1)]
  \item $(\mathbf{2}^{\univ{}}, \subseteq)$: the domain of concrete instances;
  \item $(\univi{}, \subseteq)$;
  \item $(\univi{}, \supseteq)$;
  \item $(\univi{}\times\univi{}, \preceq)$;
  \item $(\univi{\nullS}, \to)$;
  \item $(\univi{\nullS}, \leftarrow)$;
  \item $(\univi{\nullS}\times\univi{}, \preceq)$;
\end{enumerate}

\begin{proposition}
  $(\mathbf{2}^{\univ{}}, \subseteq)$ is a complete, thus, bounded lattice.
\end{proposition}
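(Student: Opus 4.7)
The plan is to observe that this is the standard result that any powerset, ordered by inclusion, is a complete lattice, and to apply it directly to the set $\univ{}$ of all databases.

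First, I would recall that $\mathbf{2}^{\univ{}}$ denotes the set of all subsets of the universe of databases $\univ{}$, and that $\subseteq$ is a partial order on any collection of sets (reflexive, antisymmetric, transitive). So $(\mathbf{2}^{\univ{}},\subseteq)$ is clearly a partially ordered set.

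Next, I would establish completeness by exhibiting arbitrary suprema and infima. For any family $\{C_i\}_{i\in I} \subseteq \mathbf{2}^{\univ{}}$, I would define $\bigsqcup_{i\in I} C_i := \bigcup_{i\in I} C_i$ and $\bigsqcap_{i\in I} C_i := \bigcap_{i\in I} C_i$. Both are subsets of $\univ{}$, hence elements of $\mathbf{2}^{\univ{}}$, and a short verification shows that $\bigcup_{i\in I} C_i$ is the least upper bound (it contains each $C_j$, and any upper bound must contain the union) while $\bigcap_{i\in I} C_i$ is the greatest lower bound by the dual argument. In particular, every two-element subset has a join and a meet, so $(\mathbf{2}^{\univ{}},\subseteq)$ is a lattice, and since arbitrary joins and meets exist, it is complete.

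Finally, boundedness follows immediately from completeness applied to the whole carrier: $\top := \bigsqcup \mathbf{2}^{\univ{}} = \univ{} \in \mathbf{2}^{\univ{}}$ and $\bot := \bigsqcap \mathbf{2}^{\univ{}} = \emptyset \in \mathbf{2}^{\univ{}}$, and by construction $\bot \subseteq C \subseteq \top$ for every $C \in \mathbf{2}^{\univ{}}$, with $\top$ and $\bot$ acting as identities for $\sqcap$ and $\sqcup$ respectively. There is no real obstacle here; the only thing one has to be mindful of is that $\univ{}$ itself (the set of all finite databases over $\constS$) is a legitimate element of $\mathbf{2}^{\univ{}}$, which is immediate from the definition.
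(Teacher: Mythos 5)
Your proof is correct and follows essentially the same route as the paper: both invoke the standard fact that a powerset ordered by inclusion is a complete (hence bounded) lattice, with arbitrary unions and intersections as suprema and infima and $\top = \univ{}$, $\bot = \emptyset$. Your version is slightly more explicit in verifying the least-upper-bound and greatest-lower-bound properties, but there is no substantive difference in approach.
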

\begin{proof}
  Every powerset domain is a bounded lattice with infimum $\Dnst \cap \Dnst'$, supremum $\Dnst \cup \Dnst'$, $\top = \univ{} = \bigcup_{\Dnst\in\univ{}} \Dnst$ and $\bot = \emptyset = \bigcap_{\Dnst\in\univ{}} \Dnst$.
  Furthermore, it is a complete lattice, meaning that for every subset $X$ of $\mathbf{2}^{\univ{}}$, greatest lower bound (i.e., $\bigcap_{\Dnst\in X} \Dnst$) and least upper bound (i.e., $\bigcup_{\Dnst\in X} \Dnst$) exist and are unique.\qed
\end{proof}

\begin{proposition}\label{prop:null-hom-lattice}
  $(\univi{\nullS},\to)$ is a complete lattice.
\end{proposition}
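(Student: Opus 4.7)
The plan is to establish the two lattice-theoretic requirements on top of the already verified preorder structure on $\univi{\nullS}$: antisymmetry (modulo $\homeq$) and existence of arbitrary joins and meets. Reflexivity and transitivity of $\to$ are immediate from the identity homomorphism and composition of homomorphisms, and antisymmetry holds by the very definition of $\homeq$ as mutual homomorphism existence. So the real content is to exhibit $\bigsqcup X$ and $\bigsqcap X$ for every $X \subseteq \univi{\nullS}$.

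For the join I would define $\bigsqcup X$ as the set-theoretic union $\bigcup_{\Inst \in X}\Inst'$ of pairwise disjoint renamings $\Inst' \homeq \Inst$, where the labeled nulls of distinct instances are chosen to be disjoint while constants are kept unchanged. Each inclusion $\Inst' \hookrightarrow \bigsqcup X$ is a homomorphism, so $\bigsqcup X$ is an upper bound in $(\univi{\nullS}, \to)$. Given any other upper bound $\Inst[L]$ witnessed by homomorphisms $h_{\Inst} \colon \Inst \to \Inst[L]$, the pointwise union $\bigcup_{\Inst \in X} h_{\Inst}$ is well-defined because the domains share only constants (on which each $h_{\Inst}$ acts as the identity) and yields the required mediating homomorphism $\bigsqcup X \to \Inst[L]$, confirming the universal property of the join.

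For the meet I would use the standard categorical direct product. Its carrier consists of tuples $(a_{\Inst})_{\Inst \in X}$ with $a_{\Inst} \in \domain(\Inst)$, subject to the rule that whenever one coordinate is a constant $c \in \constS$ all coordinates must equal $c$; such uniform constant tuples are identified with $c$ itself, while the remaining tuples serve as fresh labeled nulls. The atoms are precisely the $p\bigl((a^{1}_{\Inst})_{\Inst}, \ldots, (a^{k}_{\Inst})_{\Inst}\bigr)$ such that $p(a^{1}_{\Inst}, \ldots, a^{k}_{\Inst}) \in \Inst$ for every $\Inst \in X$. Projection to any coordinate furnishes a homomorphism $\bigsqcap X \to \Inst$, and, for any family $h_{\Inst} \colon \Inst[L] \to \Inst$, the tupling $\ell \mapsto (h_{\Inst}(\ell))_{\Inst}$ is a homomorphism $\Inst[L] \to \bigsqcap X$, establishing the universal property of the meet.

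The hard part will be the meet construction: the join is essentially a disjoint union, whereas the product must carefully separate constants (which admit no nontrivial homomorphic image) from nulls (which may be identified). Specifically, one must verify that excluding tuples mixing distinct constants, and collapsing uniform constant tuples to their constant, is exactly what makes both the projections and the tupling mediator into constant-preserving homomorphisms. For infinite $X$ there is no issue in principle, since instances in $\univi{\nullS}$ may contain infinitely many atoms, but one should check that the product has enough elements to witness every atom of $\Inst[L]$ under the mediator; this follows directly from the componentwise definition of atoms in $\bigsqcap X$.
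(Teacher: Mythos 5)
Your join construction is essentially the paper's (a union over disjointly renamed nulls, with the mediating homomorphism obtained as the union of the individual ones), so that half is fine. For the meet you take a genuinely different route from the paper --- an explicit categorical product, where the paper instead argues indirectly that the collection of lower bounds is closed under joins and takes the join of all lower bounds --- but your version of the product is incorrect as stated. The problem is the rule that ``whenever one coordinate is a constant $c$ all coordinates must equal $c$'': discarding tuples that mix a constant with a different constant or with a null makes the product too small and destroys the universal property. Take $X=\{\{\A\},\{\B\}\}$. The greatest lower bound is $\{\pN{0}\}$ (any common lower bound must realize its $P$-atom on a term that maps both to $\text{A}$ and to $\text{B}$, hence on a null, and $\{\pN{0}\}$ maps into both factors), yet your carrier admits no tuple at all --- $(\text{A},\text{B})$ is excluded --- so your product is $\emptyset$, a strictly smaller lower bound. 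The paper's own worked example exhibits the same failure: for $\Inst=\{\A,\B,\fNN{0}{1}\}$ and $\Jnst=\{\A,\C,\fAC\}$ the stated meet is $\{\A,\fNN{0}{1}\}$, whose $F$-atom can only arise from the mixed tuples $(\nullE_0,\text{A})$ and $(\nullE_1,\text{C})$ that you forbid.

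The fix is the opposite of what you conjecture needs verifying: keep \emph{all} tuples, identify only the uniform constant tuples $(c,\ldots,c)$ with $c$, and declare every other tuple (mixed ones included) a fresh labeled null. Constant-preservation of the projections is not threatened by this, because a homomorphism is only required to fix elements of $\constS$; a null may map to a constant, so the projection sending the null $(\text{A},\text{B})$ to $\text{A}$ in one coordinate and to $\text{B}$ in the other is legitimate, and the tupling mediator stays well defined since $h_{\Inst}(\ell)$ may now be an arbitrary combination of constants and nulls across coordinates. With that correction your product argument goes through and is arguably more constructive than the paper's existence argument. (Both your proof and the paper's gloss over a cardinality point for infinite $X$ --- the product, respectively the union of all lower bounds, may require more than countably many nulls --- so I do not count that against you.)
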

\begin{proof}
  Without loss of generality, we assume that instances $\Inst$ and $\Jnst$, if any, use distinct nulls.
  \begin{description}
    \item[Supremum:] For $\Inst,\Jnst\in\univi{\nullS}$, we get the unique least upper bound $\Inst\sqcup\Jnst$ by $\Inst \cup \Jnst$.
    Let $\Inst[U]$ be any other upper bound.
    Then there are homomorphisms $h_1 : \Inst \to \Inst[U]$ and $h_2 : \Jnst \to \Inst[U]$, so that $h_1 \cup h_2$ is a homomorphism $\Inst\sqcup\Jnst \to \Inst[U]$.
    \item[Infimum:] For $\Inst,\Jnst\in\univi{\nullS}$, we find that $\Inst\cap\Jnst$ is a lower bound of $\Inst$ and $\Jnst$.
    (the respective homomorphism is the identity on $\Inst\cap\Jnst$).
    It remains to be shown that there is a greatest lower bound of $\Inst$ and $\Jnst$, denoted $\Inst\sqcap\Jnst$, that is unique.
    To prove it, we use the observation that if $\Inst[L]_1$ and $\Inst[L]_2$ are lower bounds of $\Inst$ and $\Jnst$, then $\Inst[L]_1\sqcup\Inst[L]_2$ is also a lower bound.
    In that case, $\Inst[L]_1\sqcup \Inst[L]_2$ is a lower bound greater than $\Inst[L]_1$ and $\Inst[L]_2$.
    If $\Inst[L]_1$ and $\Inst[L]_2$ are distinct greatest lower bounds, then $\Inst[L]_1\sqcup\Inst[L]_2$ is a lower bound greater than $\Inst[L]_1$ and $\Inst[L]_2$, contradicting the assumption that $\Inst[L]_1$ and $\Inst[L]_2$ are greatest lower bounds.
    Hence, such a pair of distinct greatest lower bounds $\Inst[L]_1$ and $\Inst[L]_2$ must not exist, implying there is a unique greatest lower bound.

    For the remainder of the proof, let us assume that $\Inst[L]_1$ and $\Inst[L]_2$ use distinct labeled nulls.
    As $\Inst[L]_1$ and $\Inst[L]_2$ are lower bounds of $\Inst$ and $\Jnst$, there are homomorphisms $h_1^X : \Inst[L]_1 \to X$ and $h_2^X : \Inst[L]_2 \to X$ for $X\in\{\Inst,\Jnst\}$.
    Then $h_1^X \cup h_2^X$ is a homomorphism certifying for $\Inst[L]_1\sqcup\Inst[L]_2 \to X$, showing that $\Inst[L]_1 \sqcup \Inst[L]_2$ is actually a lower bound of \Inst and \Jnst.
    \item[Completeness:] Let $\Inst[X]\subseteq\univi{\nullS}$.
    Least upper bound $\bigsqcup \Inst[X] = \bigcup \Inst[X]$ is defined and is unique.
    For the greatest lower bound $\bigsqcap\Inst[X]$, we get uniqueness by following similar arguments as in the binary case.
  \end{description}
  Thus, we get $\bigsqcup_{\Inst\in\univi{\nullS}} \Inst = \bigcup_{\Inst\in\univi{\nullS}} \Inst = \univi{\nullS}$ as $\top$ and $\bigsqcap_{\Inst\in\univi{\nullS}} \Inst = \emptyset$ as $\bot$.
  Note that $\bigcap_{\Inst\in\univi{\nullS}} \Inst \subseteq \bigsqcap_{\Inst\in\univi{\nullS}} \Inst$, thus $\bot = \emptyset$.\qed
\end{proof}
\begin{corollary}\label{cor:lattices-rest}
  (1) $(\univi{\nullS},\leftarrow)$, (2) $(\univi{},\subseteq)$, and (3) $(\univi{},\supseteq)$ are complete lattices.
\end{corollary}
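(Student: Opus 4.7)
The plan is to dispatch each of the three claims by leaning on Proposition~\ref{prop:null-hom-lattice} and on standard lattice-theoretic duality, so no new combinatorial arguments should be required.

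First, for $(\univi{\nullS},\leftarrow)$ I would simply invoke order duality. Since $\leftarrow$ is the converse of $\rightarrow$, the poset $(\univi{\nullS},\leftarrow)$ is the order-dual of $(\univi{\nullS},\rightarrow)$, which is complete by Proposition~\ref{prop:null-hom-lattice}. Completeness is preserved under taking order duals, with suprema and infima swapped: the least upper bound with respect to $\leftarrow$ of a family $\Inst[X]\subseteq\univi{\nullS}$ is $\bigsqcap \Inst[X]$ (with $\sqcap$ interpreted in $(\univi{\nullS},\rightarrow)$), and the greatest lower bound is $\bigsqcup \Inst[X] = \bigcup \Inst[X]$. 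Top and bottom are likewise swapped. Uniqueness of both bounds transfers from Proposition~\ref{prop:null-hom-lattice}, so nothing new has to be proved.

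For $(\univi{},\subseteq)$, the key observation is that on $\univi{}$ there are no labeled nulls: every homomorphism is forced to be the identity on constants, so for $\Inst,\Jnst\in\univi{}$ we have $\Inst\to\Jnst$ if and only if $\Inst\subseteq\Jnst$. Accordingly, for any family $\Inst[X]\subseteq\univi{}$ the set-theoretic $\bigcup \Inst[X]$ and $\bigcap \Inst[X]$ are again instances in $\univi{}$ (being sets of ground atoms over $\constS$), and they serve as the least upper bound and greatest lower bound respectively. Uniqueness follows from extensionality of set equality. The greatest element is the set of all ground atoms over $\constS$ (with predicates in $\predS$), and the least element is $\emptyset$. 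This makes $(\univi{},\subseteq)$ a complete sublattice of the usual powerset lattice on ground atoms.

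For $(\univi{},\supseteq)$, I would again invoke order duality of the just-established complete lattice $(\univi{},\subseteq)$: suprema and infima swap, $\bigsqcup \Inst[X] = \bigcap \Inst[X]$ and $\bigsqcap \Inst[X] = \bigcup \Inst[X]$, with top $\emptyset$ and bottom the set of all ground atoms over $\constS$.

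The hard part will be essentially non-existent here: the only subtlety is justifying the identification of $\to$ with $\subseteq$ on $\univi{}$ (which follows from $h(c)=c$ for all $c\in\constS$), after which each claim reduces either to a textbook powerset argument or to order duality applied to Proposition~\ref{prop:null-hom-lattice}. Consequently the proof can be kept very short, amounting to three short paragraphs corresponding to the three items.
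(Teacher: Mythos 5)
Your proposal is correct and follows essentially the same route as the paper: item (1) by order duality from Proposition~\ref{prop:null-hom-lattice}, item (2) via the identification of $\to$ with $\subseteq$ on null-free instances (where you argue directly through the powerset lattice while the paper phrases it as a special case of the proposition), and item (3) by duality of (2). The only (minor) added value of your version is that your direct argument for (2) explicitly checks that unions and intersections of families in $\univi{}$ remain in $\univi{}$, which is the closure condition implicitly needed for the "special case" claim.
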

\begin{proof}
  \begin{enumerate}[(1)]
    \item Follows from duality of Proposition~\ref{prop:null-hom-lattice}.
    \item Follows as a special case of Proposition~\ref{prop:null-hom-lattice}.
    Therefore note, $\univi{}\subsetneq\univi{\nullS}$ and for each $\Inst[I],\Inst[J]\in\univi{}$, we get $\Inst[I]\subseteq\Inst[J]$ if, and only if, $\Inst[I]\to\Inst[J]$.
    \item Follows as a special case of (1) (cf.\xspace proof of (2)) or by duality of (2).\qed
  \end{enumerate}
\end{proof}

\begin{lemma}\label{lemma:cartesian-lattice}
  If $(A,\leq_A)$ and $(B,\leq_B)$ are complete lattices, then so is $(A\times B, \preceq)$ with for $(a_1, b_1), (a_2, b_2) \in A \times B$, we get $(a_1,b_1)\preceq (a_2,b_2)$ if, and only if, $a_1 \leq_A a_2$ and $b_1 \leq_B b_2$. 
\end{lemma}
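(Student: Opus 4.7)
The plan is to verify that $\preceq$ is a partial order on $A\times B$, and then exhibit infima and suprema for arbitrary subsets, constructed pointwise from those of $A$ and $B$.

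First, I would observe that $\preceq$ inherits reflexivity, antisymmetry, and transitivity from $\leq_A$ and $\leq_B$ componentwise. Reflexivity holds because $a\leq_A a$ and $b\leq_B b$ for every $(a,b)$. Antisymmetry follows because $(a_1,b_1)\preceq(a_2,b_2)$ and $(a_2,b_2)\preceq(a_1,b_1)$ yield $a_1\leq_A a_2\leq_A a_1$ and $b_1\leq_B b_2\leq_B b_1$, so by antisymmetry of $\leq_A$ and $\leq_B$ we conclude $a_1=a_2$ and $b_1=b_2$. Transitivity is analogous.

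Next, to establish completeness, let $X\subseteq A\times B$ be an arbitrary subset. Define the projections $X_A := \{a\in A\mid \exists b\in B : (a,b)\in X\}$ and $X_B := \{b\in B\mid \exists a\in A : (a,b)\in X\}$. Since $(A,\leq_A)$ and $(B,\leq_B)$ are complete, the elements $s_A := \bigsqcup_A X_A$ and $s_B := \bigsqcup_B X_B$ exist and are unique, and likewise for the infima $i_A := \bigsqcap_A X_A$ and $i_B := \bigsqcap_B X_B$. I claim $(s_A,s_B)$ is the least upper bound of $X$ in $(A\times B,\preceq)$: it is an upper bound because every $(a,b)\in X$ satisfies $a\leq_A s_A$ and $b\leq_B s_B$, hence $(a,b)\preceq (s_A,s_B)$; and if $(u_A,u_B)$ is any other upper bound, then $u_A$ dominates every $a\in X_A$ and $u_B$ every $b\in X_B$, whence $s_A\leq_A u_A$ and $s_B\leq_B u_B$ by the universal property of suprema in $A$ and $B$, giving $(s_A,s_B)\preceq (u_A,u_B)$. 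A dual argument shows that $(i_A,i_B)$ is the greatest lower bound of $X$.

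There is no substantial obstacle here; the only subtlety worth noting is that one must really take suprema and infima of the \emph{projections} $X_A, X_B$ rather than attempting to combine coordinates of elements of $X$ directly, since a pair $(s_A,s_B)$ built from suprema of the projections need not itself lie in $X$. Applying the construction to two-element subsets yields binary joins and meets, while applying it to arbitrary subsets yields completeness. \qed
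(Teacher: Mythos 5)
Your proof is correct and follows essentially the same approach as the paper: both arguments construct suprema and infima of a subset of $A\times B$ componentwise from the suprema and infima of its projections onto $A$ and $B$, and verify the universal property in each coordinate. The only cosmetic difference is that you treat arbitrary subsets directly and obtain binary joins and meets as a special case, whereas the paper handles the binary case first and then generalizes; you also add an explicit (routine) check that $\preceq$ is a partial order, which the paper omits.
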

\begin{proof}
  As $(A,\leq_A)$ and $(B,\leq_B)$ are complete lattices, every subset $X$ of $A$ (or $B$, resp.) has an infimum $\sqcap_A X$ ($\sqcap_B X$, resp.) and a supremum $\sqcap_A X$ ($\sqcap_B X$, resp.).
  Then for $(a_1,b_1), (a_2, b_2)\in A\times B$, $(a_1 \sqcap_A a_2, b_1 \sqcap_B b_2)$ is the infimum $(a_1,b_1) \sqcap (a_2,b_2)$:
  If $(a_3,b_3)\in A\times B$ is a lower bound of $(a_1,b_1)$ and $(a_2,b_2)$, then $a_3$ is a lower bound $a_1$ and $a_2$, and $b_3$ is a lower bound of $b_1$ and $b_2$ (by definition of $\preceq$).
  Hence, $a_3 \leq_A a_1 \sqcap_A a_2$ and $b_3 \leq_B b_1 \sqcap_B b_2$ since $(A,\leq_A)$ and $(B,\leq_B)$ are complete lattices.
  This means, $(a_3,b_3)\preceq (a_1,b_1)\sqcap (a_2,b_2)$ (again by definition of $\preceq$).
  A similar line of arguments can be taken for the supremum $(a_1,b_1) \sqcup (a_2,b_2)$.

  The bounds of the lattice are $(\top_A,\top_B) = \top$ and $(\bot_A,\bot_B)=\bot$.
  For every subset $Y$ of $A\times B$, let $Y_A := \{ a \mid (a,b)\in Y \}$ and $Y_B := \{ b \mid (a,b)\in Y\}$.
  Then we obtain the infimum of $Y$ by $\bigsqcap Y := (\bigsqcap_A Y_A, \bigsqcap_B Y_B)$ and the supremum by $\bigsqcup Y := (\bigsqcup_A Y_A,\bigsqcup_B Y_B)$, all justified by the fact that the input lattices are complete.\qed
\end{proof}

\begin{corollary}
  $(\univi{}\times\univi{},\preceq)$ and $(\univi{\nullS}\times\univi{\nullS},\preceq)$ are complete lattices.
\end{corollary}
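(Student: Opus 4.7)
The plan is to apply Lemma~\ref{lemma:cartesian-lattice} twice, since that lemma already does the heavy lifting: the componentwise product of two complete lattices is a complete lattice. All that remains is to identify, for each of the two products in the statement, what the two factor lattices are and to cite why each factor is complete.

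For $(\univi{\nullS}\times\univi{\nullS},\preceq)$, I first unpack the definition of $\preceq$ given in Section~\ref{sec:abstract_interpretation_of_DMS}: a pair $(\Inst^{\sqcap}_{1},\Inst^{\sqcup}_{1})$ is below $(\Inst^{\sqcap}_{2},\Inst^{\sqcup}_{2})$ precisely when $\Inst^{\sqcap}_{1}\leftarrow\Inst^{\sqcap}_{2}$ in the first coordinate and $\Inst^{\sqcup}_{1}\to\Inst^{\sqcup}_{2}$ in the second. Thus $\preceq$ is exactly the componentwise order induced on $(\univi{\nullS},\leftarrow)\times(\univi{\nullS},\to)$. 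The first factor is a complete lattice by Corollary~\ref{cor:lattices-rest}(1) and the second by Proposition~\ref{prop:null-hom-lattice}, so Lemma~\ref{lemma:cartesian-lattice} immediately yields completeness of $(\univi{\nullS}\times\univi{\nullS},\preceq)$.

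For $(\univi{}\times\univi{},\preceq)$ I would argue analogously. On null-free instances, the homomorphism preorder $\to$ coincides with set inclusion $\subseteq$ (as noted already in the proof of Corollary~\ref{cor:lattices-rest}(2)), so $\preceq$ restricts to the componentwise order on $(\univi{},\supseteq)\times(\univi{},\subseteq)$. Both factors are complete lattices by Corollary~\ref{cor:lattices-rest}(2) and (3), and a second application of Lemma~\ref{lemma:cartesian-lattice} closes the case.

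Since the entire argument is a direct invocation of a lemma whose hypotheses are already recorded in the preceding appendix, there is no real obstacle to overcome; the only point worth spelling out explicitly is the translation between the notation $\preceq$ used in the body of the paper and the componentwise order $\preceq$ assumed in Lemma~\ref{lemma:cartesian-lattice}, together with the observation that $\to$ degenerates to $\subseteq$ on $\univi{}$ so that the null-free case is genuinely an instance of the same construction.
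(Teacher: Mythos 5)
Your proposal is correct and matches the paper's own argument, which likewise derives the corollary by applying Lemma~\ref{lemma:cartesian-lattice} to the factor lattices established in Proposition~\ref{prop:null-hom-lattice} and Corollary~\ref{cor:lattices-rest}. You merely spell out the identification of the factors and the coincidence of $\to$ with $\subseteq$ on $\univi{}$ more explicitly than the paper does, which is fine.
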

\begin{proof}
  Follows from Lemma~\ref{lemma:cartesian-lattice} together with Proposition~\ref{prop:null-hom-lattice} and Corollary~\ref{cor:lattices-rest}.\qed
\end{proof}
 %
\section{Galois Connections}
Instead of proving Propositions~\ref{theorem:union_galois_connection}--\ref{prop:galois_null_combination}, showing that $(\alpha_1,\gamma_1)$, \ldots, $(\alpha_6,\gamma_6)$ are Galois connections, we extract the common principle from these connections into the following theorem.
\begin{theorem}\label{thm:}
  Let $(\absDom,\sqsubseteq)$ be an abstract domain, such that $\univ{}\subseteq\absDom$ and $(\mathbf{2}^{\univ{}},\subseteq)$ a concrete domain, both complete lattices.
  Then functions $\alpha : \mathbf{2}^{\univ{}} \to \absDom$ with $\alpha(C) := \bigsqcup C$ and $\gamma : \absDom \to \mathbf{2}^{\univ{}}$ with $\gamma(\Inst[A]) = \{ \Dnst \mid \Dnst\in\univ{} \wedge \Dnst \sqsubseteq \Inst[A] \}$ form a Galois connection $(\alpha,\gamma)$.
\end{theorem}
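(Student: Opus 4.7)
The plan is to verify the three conditions of Definition~\ref{def:absint} in order, using completeness of $(\absDom,\sqsubseteq)$ together with the embedding $\univ{}\subseteq\absDom$. The underlying principle is the classical left/right-adjoint pattern: $\alpha$ is the supremum operator and $\gamma$ is the principal downset (intersected with $\univ{}$), so a Galois connection is essentially forced.

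First, I would dispatch totality and monotonicity. For $\alpha$: given any $C\in\mathbf{2}^{\univ{}}$, the inclusion $\univ{}\subseteq\absDom$ lets us view $C$ as a subset of $\absDom$; completeness of $(\absDom,\sqsubseteq)$ then ensures $\bigsqcup C$ exists uniquely in $\absDom$, so $\alpha$ is total. For $\gamma$: no supremum is needed, and $\{\Dnst\in\univ{}\mid\Dnst\sqsubseteq\Inst[A]\}$ is well-defined for every $\Inst[A]\in\absDom$, so $\gamma$ is total. Monotonicity of $\alpha$ follows from the fact that any upper bound of $C_2$ is automatically an upper bound of $C_1\subseteq C_2$, hence $\bigsqcup C_1\sqsubseteq\bigsqcup C_2$. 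Monotonicity of $\gamma$ follows from transitivity of $\sqsubseteq$: if $\Inst[A]_1\sqsubseteq\Inst[A]_2$, then every $\Dnst$ with $\Dnst\sqsubseteq\Inst[A]_1$ also satisfies $\Dnst\sqsubseteq\Inst[A]_2$, so $\gamma(\Inst[A]_1)\subseteq\gamma(\Inst[A]_2)$.

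Second, I would verify the two round-trip conditions. For $\gamma(\alpha(C))\supseteq C$: any $\Dnst\in C$ satisfies $\Dnst\sqsubseteq\bigsqcup C=\alpha(C)$ by the defining property of the supremum, and $\Dnst\in\univ{}$ by assumption, so $\Dnst\in\gamma(\alpha(C))$. For $\alpha(\gamma(\Inst[A]))\sqsubseteq\Inst[A]$: by definition, every element of $\gamma(\Inst[A])$ is bounded above by $\Inst[A]$, so $\Inst[A]$ is itself an upper bound of $\gamma(\Inst[A])$; the least upper bound therefore satisfies $\alpha(\gamma(\Inst[A]))=\bigsqcup\gamma(\Inst[A])\sqsubseteq\Inst[A]$.

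The only minor subtlety is ensuring that, in each intended instantiation, the supremum computed in the ambient lattice $\absDom$ coincides with the abstraction one has in mind (for instance, that $\bigsqcup C$ in the dualized domain $(\univi{},\supseteq)$ is in fact $\bigcap C$). This is automatic from the definitions, so no genuine obstacle arises; the theorem amounts to a packaging of the standard adjunction between a complete lattice and the powerset of a distinguished embedded subset, and immediately yields Propositions~\ref{theorem:union_galois_connection}--\ref{prop:galois_null_combination} as instantiations (taking duals where the abstract order is reversed, so that $\bigsqcup$ in the dual becomes $\bigsqcap$ in the original).
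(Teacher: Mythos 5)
Your proof is correct and follows essentially the same route as the paper's: verifying totality/monotonicity, then the two round-trip conditions via the defining properties of the least upper bound. If anything, yours is slightly more complete, since the paper's proof only argues totality explicitly and leaves monotonicity implicit, whereas you spell out both monotonicity arguments.
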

\begin{proof}
  We need to show that $(\alpha,\gamma)$ satisfies the three properties of a Galois connection (cf.\xspace Definition~\ref{def:absint}).
  \begin{description}
    \item[Totality/Monotonicity:]
    Since $(\absDom,\sqsubseteq)$ is a complete lattice, $\alpha$ is total as it uses the abstract least upper bound of a set of databases, given as input.
    Also, $\gamma$ is total since, in the worst case, $\gamma(\Inst[A])=\emptyset$ if there is not database being smaller than $\Inst[A]$ up to $\sqsubseteq$.
    \item[Concretization Preservation:]
    We need to show that for any $C\in\mathbf{2}^{\univ{}}$, $C \subseteq \gamma \circ \alpha(C)$.
    As $\alpha(C)$ is the (abstract) least upper bound of all the databases in $C$, we get $\Dnst \sqsubseteq \alpha(C)$ for $\Dnst\in C$.
    Since $\gamma(\alpha(C))$ is the set of all databases smaller than $\alpha(C)$, we have that $C \subseteq \gamma(\alpha(C))$.
    \item[Abstraction Optimality:] 
    We need to show that for any $\Inst[A]\in\absDom$, $\alpha\circ\gamma(\Inst[A]) \sqsubseteq \Inst[A]$.
    $\gamma(\Inst[A])$ is the set of all databases (abstractly) smaller than $\Inst[A]$.
    Thus, $\Inst[A]$ is an upper bound of $\gamma(\Inst[A])$.
    Since $\alpha(\gamma(\Inst[A]))$ produces the least upper bound of $\gamma(\Inst[A])$, the result follows.\qed
  \end{description}
\end{proof}
Propositions~\ref{theorem:union_galois_connection}--\ref{prop:galois_null_combination} follow as corollaries.
 %
\section{Bisimilarity}
\subsection{Proof of Theorem~\ref{theorem:union_abstraction_forall}}
The Galois connection we are about to prove to be an abstract interpretation framework is $(\alpha_1,\gamma_1)$ with $\alpha_1(C) := \bigcup_{\Dnst\in C} \Dnst$ and $\gamma_1(\Inst[A]) := \{ \Dnst\in\univ{} \mid \Dnst \subseteq \Inst[A] \}$.
The guard language is CNA and we need to give a $\forall$-bisimulation.
 
We show that 
		$$R  := \{ (\alpha(C), C) \mid C \in \mathbf{2}^{\univ{}} \}$$ 
		is a $\forall$-bisimulation.
		Let $(\Inst,C)\in R$ (i.e., $\Inst=\alpha(C)$),
		$\actsigma$ be an action such that $\act = (g,\setDel,\setAdd)$ and $g = \forall \vec y .\ \neg a_{1} \wedge \ldots \wedge \neg a_{m}$ ($m\in\mathbb{N}$) is a CNA.
		\begin{enumerate}
			\item If $\Inst \transrel{\actsigma} \Inst'$ with $\Inst' = (\Inst \setminus \setDel\sigma^{\star}) \cup \setAdd\sigma^{\star}$, we need to show that $C \transrel{\actsigma} C'$ for $C' = \{ \Dnst'\in\univ{} \mid \exists \Dnst\in C : \Dnst \transrel{\actsigma} \Dnst' \}$, such that $\Dnst\enabled{\act,\sigma}$ for all $\Dnst\in C$ and $(\Inst',C')\in R$ (i.e., $\alpha_{1}(C')=\Inst'$).
			
			As $\Inst\enabled{\act,\sigma}$, there is no function $h : \constS \cup \varS \to \constS$ such that $\sigma\subseteq h$ and $h(g)\cap\Inst\neq\emptyset$.
			Let $\Dnst\in C$.
			Suppose there is a function $h'$ such that $\sigma\subseteq h'$ and $h'(g)\cap\Dnst\neq\emptyset$.
			Then $h'(g) \cap \Inst \neq \emptyset$ as $\Dnst\subseteq\Inst$ (by $\alpha_1$), contradicting the assumption that no such function exists.
			Thus, $\Dnst\enabled{\act,\sigma}$ which holds for arbitrary $\Dnst\in C$.
			Moreover, for every $\Dnst\in C$ there is a $\Dnst'$, such that $\Dnst \transrel{\actsigma} \Dnst'$ and $\Dnst' = (\Dnst \setminus \setDel\sigma^{\star}) \cup \setAdd\sigma^{\star}$.
			The collection of all such $\Dnst'$ forms the set $C'$, such that $C \transrel{\actsigma} C'$.
			Finally, we get
      \begin{equation}\label{eq:alpha2inst}
			\begin{array}{rcl}
				\alpha_{1}(C') &=& \bigcup \{ (\Dnst \setminus \setDel\sigma^{\star}) \cup \setAdd\sigma^{\star} \mid \Dnst\in C \} \\
											 &=& (\bigcup \{ \Dnst \mid \Dnst\in C \} \setminus \setDel\sigma^{\star}) \cup \setAdd\sigma^{\star} \\
											 &=& (\alpha_{1}(C) \setminus \setDel\sigma^{\star}) \cup \setAdd\sigma^{\star} \\
											 &=& (\Inst \setminus \setDel\sigma^{\star}) \cup \setAdd\sigma^{\star} = \Inst'\text,
			\end{array}
    \end{equation}
			proving the fact that $(\Inst',C')\in R$.

			\item If $C \transrel{\actsigma} C'$, then for every $\Dnst\in C$, $\Dnst\transrel{\actsigma} \Dnst'$ such that $\Dnst' = (\Dnst \setminus \setDel\sigma^{\star}) \cup \setAdd\sigma^{\star}$ and $\Dnst'\in C'$.
			It holds that there is no function $h : \constS \cup \varS \to \constS$, such that $\sigma\subseteq h$ and $h(g)\cap \Dnst \neq\emptyset$ for every $\Dnst\in C$.
			Suppose, there is an $h'$ such that $\sigma\subseteq h'$ and $h'(g)\cap\Inst\neq\emptyset$.
			Then there is an atom $a\in h'(g)\cap\Inst$ that is also included in some $\Dnst\in C$ (as $\Inst = \bigcup C$).
			Hence, $h'(g)$ has a non-empty intersection with that $\Dnst$, contradicting the assumption that $\Dnst\enabled{\act, \sigma}$ for every $\Dnst\in C$.
			Therefore, $\Inst\enabled{\act,\sigma}$ and $\Inst\transrel{\actsigma} (\Inst \setminus \setDel\sigma^{\star}) \cup \setAdd\sigma^{\star} = \Inst'$.
			By the same lines as above (cf.\xspace \eqref{eq:alpha2inst}), $\alpha_{1}(C')=\Inst'$ implying $(\Inst',C')\in R$.\qed

		\end{enumerate}

\subsection{Remark on Theorem~\ref{theorem:union_abstraction_null_forall}}
Note, although the abstract domain is slightly different, the abstraction function still uses set union as abstraction mechanism.
Therefore, and since the same guard language is conjectured, the theorem directly follows for the same reasons as given in the proof of Theorem~\ref{theorem:union_abstraction_forall}.

\subsection{Proof of Theorem~\ref{theorem:intersection_abstraction_null_forall}}
Let $(\Inst,C)\in R$ (i.e., $\Inst=\alpha(C)$).
		For some action $\actsigma$ with $\act = (\guard,\setDel,\setAdd)$,
		\begin{enumerate}
			\item $\Inst \transrel{\actsigma} \Inst'$ with $\Inst' = (\Inst \setminus \setDel\sigma^{\star}) \cup \setAdd\sigma^{\star}$ we need to show that $C \transrel{\actsigma} C'$, such that for all $\Inst[D]\in C$, there is a $\Inst[D]'\in C'$, such that $\Inst[D] \transrel{\actsigma} \Inst[D]'$ and $(\Inst',C')\in R$.
			
			Let $\Dnst\in C$.
			Since $\Inst=\bigsqcap C$ by $\alpha_4$, there is a homomorphism $id_{\Inst}:\Inst\to\Dnst$.
			From Prop.~\ref{prop:hom-cg-lemma} a homomorphism $h_\guard:\guard\rightarrow\Inst$ exists.
			Then $h = h_\guard \circ h_{\Inst}$ is a homomorphism $\guard \to \Dnst$ and $\sigma\in \guard(\Dnst)$ follows.
			Hence, $\Dnst\transrel{\actsigma} (\Dnst \setminus \setDel\sigma^{\star}) \cup \setAdd\sigma^{\star}$.
			This argument holds for all $\Dnst\in C$.		
			Thus $C \transrel{\actsigma} C' = \{ (\Dnst \setminus \setDel \sigma^*) \cup \setAdd\sigma^{\star} \mid \Dnst\in C \}$.
			
			By construction of $\Inst'$ and all $\Inst[D]'\in C'$, it is clear that $\Inst'$ is a lower bound of each $\Inst[D]'\in C'$.
			Thus, $\Inst'\to\alpha_4(C')=\bigsqcap C'$.
			It follows  $\Inst' \homeq \alpha_4(C')$ and, thus, $(\Inst',C')\in R$.
			\item $C \transrel{\actsigma} C'$ such that for every $\Dnst\in C$, $\sigma\in \guard(\Dnst)$, $\Dnst\transrel{\actsigma}\Dnst'$ and $\Dnst'\in C'$.
			It follows $\sigma\in \guard(\sqcap{C})$ and so $\sigma\in \guard(\Inst)$.
			Construction of $\Inst'$ according to the definition. Similarities pull out. Thus, $\alpha(C') \homeq \Inst'$.\qed
		\end{enumerate}

\subsection{Remark on Theorem~\ref{theorem:intersection_abstraction_forall}}
\label{app:proof_intersection}
Compared to Theorem~\ref{theorem:intersection_abstraction_null_forall}, projection-free conjunctive guards constitute the guard language for this theorem.
The reason may be found in the steps considered on the concrete domain (i.e., item 2 in the proof of Theorem~\ref{theorem:intersection_abstraction_null_forall}).
Consider the two databases
\begin{align*}
  \Dnst_1= & \{ p(A), p(B), f(A,B) \} & \Dnst_2= & \{ p(A), p(C), f(A,C) \}\text.
\end{align*}
Then the set intersection (also known as abstraction $\alpha_2$) is
\begin{align*}
  \Inst[A] = \Dnst_1 \cap \Dnst_2= & \{ p(A) \}\text,
\end{align*}
while the abstract intersection in domain $(\univi{\nullS},\leftarrow)$ is
\begin{align*}
  \Inst[B] = \Dnst_1 \sqcap \Dnst_2= & \{ p(A), f(A,n) \}\text.
\end{align*}
In $\Inst[B]$ the existence of a value $x$ such that $f(A,x)$ is expressed by the labeled null $n$.
Thus, a guard asking for the existence of a value $x$, such that $f(A,x)$ can be matched on $\Inst[B]$.
On $\Inst[A]$, on the other hand, the same guard has no match, although both concrete databases $\Dnst_1$ and $\Dnst_2$ list one $f$-atom accounting for the connection to $A$.
Thus, projections are expressed inside the abstract instances using labeled nulls (e.g., $\Inst[B]$), but is not present in the simpler set-based abstractions (e.g., $\Inst[A]$).

The proof of Theorem~\ref{theorem:intersection_abstraction_forall} follows the same lines as the proof of Theorem~\ref{theorem:intersection_abstraction_null_forall}, using $\subseteq$ instead of postulating the existence of homomorphisms\footnote{Note, $\subseteq$ ultimately entails the existence of a homomorphism, the identity function.}.
In item 2, $\sigma$ can only be considered a match on the abstract instance $\Inst$ if the conjunctive guard $g$ is projection-free (i.e., does not use existential quantification).

\subsection{Proof of Theorem~\ref{theorem:combination_abstraction_null_forall}}
Here we consider Galois connection $(\alpha_6,\gamma_6)$ on the abstract domain $(\univi{\nullS}\times\univi{\nullS}, \preceq)$ with $\alpha_6(C) := (\bigsqcap C, \bigsqcup C)$ and for $\Inst \in \univi{\nullS}\times\univi{\nullS}$, we get if $\Inst = (\Inst_1,\Inst_2)$, then $\Inst^\sqcap = \Inst_1$ and $\Inst^\sqcup = \Inst_2$.
We prove $\forall$-bisimilarity for DMS using safe normal conjunctive guards.

Towards this goal, we show that
$$R = \{ (\alpha_6(C), C) \mid C\subseteq\univ{} \}$$
is a $\forall$-bisimulation.
For $(\Inst,C)\in R$ and $\actsigma$,
\begin{enumerate}
	\item if $\Inst \transrel{\actsigma} \Inst'$,
	we need to show that $C\transrel{\actsigma} C'$ such that (a) for all $\Dnst\in C$, $\Dnst \transrel{\actsigma} \Dnst'$ and $\Dnst'\in C$, (b) for each $\Dnst'\in C$ there is a $\Dnst\in C$ with $\Dnst \transrel{\actsigma} \Dnst'$, (c) $(\Inst',C')\in R$.
	Recall that $\Inst = ()$ and \act uses an NCG guard $g$, such that $\sigma\in g^+(\Inst^\sqcap)$ and $\sigma\in g^-(\Inst^\sqcup)$.
	For $\Dnst\in C$, we have that $\Inst^\sqcap \to \Dnst$ and $\Dnst \to \Inst^\sqcup$.
	By the argumentations in theorem \ref{theorem:intersection_abstraction_null_forall} and \ref{theorem:union_abstraction_null_forall}, we obtain $\sigma\in g^+(\Dnst) \cup g^-(\Dnst)$.
	Thus, $\Dnst \transrel{\actsigma} \Dnst'$ for all $\Dnst\in C$ and $\Dnst'\in C'$ such that $\Dnst' = (\Dnst \setminus \setDel\sigma^\bullet) \cup \setAdd\sigma^\bullet$ for some extension $\sigma^\bullet$ of $\sigma$ (surely, $\sigma$ and $\sigma^\bullet$ are compatible as $\sigma^\bullet$ extends $\sigma$).
	Following the arguments of the previous theorems separately for $\Inst'^\sqcap$ and $\Inst'^\sqcup$, we obtain the result that $\Inst' = \alpha(C')$, implying $(\alpha(C'),\Inst')\in R$.
	\item if $C \transrel{\actsigma} C'$, then the same separation of concerns for $\alpha(C)^\sqcap$ and $\alpha(C)^\sqcup$ applies.
	The necessary arguments can be found in the proofs of theorems \ref{theorem:intersection_abstraction_null_forall} and \ref{theorem:union_abstraction_null_forall}.\qed
\end{enumerate}

\subsection{Remark on Theorem~\ref{theorem:combination_abstraction_forall}}
Once again, the proof is similar to the proof of the abstraction on the more general domain.
Since the abstraction function here is composed of the $\alpha_1$ and $\alpha_2$, we have to stick to projection-freeness of the guards once more.
Otherwise, the proof follows the same lines as the proof of Theorem~\ref{theorem:combination_abstraction_null_forall}, respecting the notes mentioned in Sect.~\ref{app:proof_intersection} 
 
%

\end{document}